\newtheorem{theorem}{Theorem}
\newcommand{\flo}[1]{\textcolor{black}{#1}}
\newcommand{\ac}[1]{\textcolor{black}{#1}}
\begin{document}

\title{Algorithm-Based Fault Tolerance\\for Parallel Stencil Computations}
\author{Aurélien Cavelan\\
\textit{University of Basel, Switzerland}\\
\textit{aurelien.cavelan@unibas.ch}
\and
Florina M. Ciorba\\
\textit{University of Basel, Switzerland}\\
\textit{florina.ciorba@unibas.ch}
}

\maketitle

\begin{abstract}
The increase in HPC systems size and complexity, together with increasing on-chip transistor density, power limitations, and number of components, render modern HPC systems subject to soft errors. 
Silent data corruptions (SDCs) are typically caused by such soft errors in the form of bit-flips in the memory subsystem and hinder the correctness of scientific applications.
This work addresses the problem of protecting a class of iterative computational kernels, called stencils, against SDCs when executing on parallel HPC systems.
Existing SDC detection and correction methods are in general either inaccurate, inefficient, or targeting specific application classes that do not include stencils.
This work proposes a novel algorithm-based fault tolerance (ABFT) method to protect scientific applications that contain arbitrary stencil computations against SDCs.
The ABFT method can be applied both \emph{online} and \emph{offline} to accurately detect and correct SDCs in 2D and 3D parallel stencil computations.
We present a formal model for the proposed method including theorems and proofs for the computation of the associated checksums as well as error detection and correction.
We experimentally evaluate the use of the proposed ABFT method on a real 3D stencil-based application (HotSpot3D) via a \mbox{fault-injection}, detection, and correction campaign.
Results show that the proposed ABFT method achieves less than $8\%$ overhead compared to the performance of the unprotected stencil application. 
Moreover, it accurately detects and corrects SDCs. 
While the offline ABFT version corrects errors more accurately, it may incur a small additional overhead than its online counterpart.
\end{abstract}


\section{Introduction}
\label{sec.introduction}

High Performance Computing (HPC) systems have rapidly grown in size and complexity over the last years.
With increasing on-chip transistor density, power limitations, and number of components, it is expected that the number of soft errors continues to grow in modern systems~\cite{ExascaleSoftwareStudy,MeasuringAndUnderstandingExtremeScale,LessonsLearned}.
Such errors can trigger bit-flips in the memory subsystem, leading to silent data corruptions (SDCs) which represent a major threat to data correctness~\cite{Snir_FailuresExascale,IESP-toward,JSFI14}.
Several phenomena cause soft errors, such as cosmic rays, aging components or packaging pollution, among others~\cite{Ogorman94,Ziegler98}.
Modern systems are equipped with error correcting codes (ECC) to protect the memory subsystems against data corruptions.
Recent studies suggest that such mechanisms may not prevent data corruptions from occurring at extreme scales~\cite{LaercioNeutron,BGomezLargeScaleStudy,GPUerrors,CosmicRaysDontStrikeTwice} and in particular in DRAM devices~\cite{MemoryErrorsGoodBadUgly}.

In this work, we focus on a class of iterative kernels that update array elements using a fixed computational pattern that repeats over the entire domain, called a \emph{stencil}.
Stencil-based computational kernels arise from the use of discretization methods such as finite differences, finite volumes, or finite elements. 
As such, they are commonly used to solve partial differential equations (PDEs) in the fields of computational fluid dynamics (CFD), cosmology, and combustion~\cite{OptimizationAndPerformanceModelingOfStencil,AnIntroductionToNumericalMethodsAndAnalaysis}. 
Other common examples of stencil-based kernels include the Jacobi kernel, the Gauss–Seidel method, and image processing. 
The simple structure of stencil-based kernels can be exploited to produce highly efficient computations. 
However, their high memory bandwidth requirements~\cite{ImplicitAndExplicitOptimizationsForStencil} make them particularly vulnerable to SDCs~\cite{BGomezLargeScaleStudy,MemoryErrorsGoodBadUgly}.

Triple modular redundancy (TMR)~\cite{Lyons1962}, together with other \mbox{\emph{full-replication}} of entire applications~\cite{Fiala12Detection} and \emph{selective-replication} of parts of an application~\cite{Berrocal17Toward} are the most general and non-intrusive approaches to detect and correct SDCs in scientific applications.
However, such methods are prohibitively expensive in terms of additional required computing resources and time.
Therefore, many application-specific detectors have been proposed as alternatives to redundancy to lower the cost of error detection. 
Specifically, \emph{data analytics-}based error detection techniques detect outliers by relying on application-specific properties, such as spatial and/or temporal data smoothness. 
\emph{Interpolation-}based error detectors employ techniques such as time series prediction and spatial multivariate interpolation~\cite{PPoPP14, Gomez15Detecting, Gomez15Exploiting} to interpolate the next value of a data-point; they offer broad error detection coverage at low cost, but typically have a lower precision compared to full-replication approaches.
A further review of relevant error detection methods is presented in Section~\ref{sec.related}.

Algorithm-based fault tolerance (ABFT) is another widely used fault tolerance approach that exploits algorithmic features to encode a small amount of redundancy into the computation, typically in the form of invariant checksums. 
These redundant features can subsequently be used to detect and correct one or multiple errors. 
ABFT has primarily been developed for linear-algebra kernels \cite{Kuang1984,Shantharam2012, ABFTDenseMatrixFactorization,NewSumOnlineABFT} due to its very low overhead, high detection rate, and the possibility to correct errors on-the-fly.
However, ABFT methods are only available for very specific problems which do not include stencil computations.

In this work, we introduce a novel ABFT method for parallel stencil computations on regular 2D and 3D computational grids.
Adapting ABFT to stencil computations is not straightforward.
The main challenge is to preserve the invariance of the checksum vectors (one for the domain rows and one for the domain columns).
Variations in the checksums make it impossible to directly compare the checksum vectors between two consecutive stencil iterations as it is the case with most ABFT approaches.
\textbf{In this work, we propose a novel method to interpolate the checksum vectors of the stencil domain based solely on the checksum vectors from the previous stencil iteration} (as illustrated later in Figure~\ref{fig.abft}).
Our approach only requires one checksum vector to \emph{detect} errors, and two checksum vectors to \emph{detect and correct} errors online after each stencil iteration.
However, the second checksum is only needed if an error has been detected in the first place, which means only one checksum must be computed every iteration.
To minimize the overhead, we show that the computation of the first checksum can be made by adding just one extra operation to stencil kernel.
In addition, we propose an extension of the method to detect errors \emph{offline}, after the application has completed or for a given detection period.
The offline method cannot correct errors.
Therefore, it must be coupled with an existing error correction mechanism. 
In this work we employ the standard checkpointing and recovery approach to correct errors offline.

Even though the proposed ABFT scheme can be applied on the entire 2D or 3D computational domain, consistently computing the checksums across threads or processes on a parallel and distributed system can lead to unwanted and costly synchronization and communication. 
\textbf{To avoid such additional costs, we propose an implementation that independently performs all the steps of the proposed approach: checksum computation, interpolation, detection, and correction, within each thread or process}.
This makes the scheme \emph{intrinsically parallel}, so that it can be applied both within a shared memory and a distributed memory system.

We implement our method on a well-known parallel stencil mini-app and perform experiments to assess both the overhead of the method and the accuracy of the correction mechanism.
Early results show that in an error-free environment, the online and the offline implementations of the method achieve similar performance, with less than $8\%$ overhead.
In case of silent errors during the execution, the overhead of the offline method increases significantly due to the high recovery cost, while the overhead of the online method remains almost unchanged. 
While the online method leads to acceptable result accuracy, the offline method is more reliable and better at correcting outliers.

To the best of our knowledge, this is the \emph{first ABFT method for arbitrary stencil-based kernel computations with arbitrary boundary conditions}.
This work brings forward the following contributions:
\begin{itemize}
\item An online detection and correction ABFT method for 2D and 3D stencil computations based on a \textbf{novel interpolation method} 
\item An offline extension of the ABFT method to detect errors after the application complete or with a given period,
\item A formal model, theorems and proofs for the interpolation of the checksums and the detection of errors, and 
\item Experiments on a real 3D stencil-based application using fault-injection.
\end{itemize}

This work is organized as follows:
Section~\ref{sec.related} reviews the related work, then Section~\ref{sec.abft2d} introduces the proposed online detection and correction ABFT method for 2D stencils. Section~\ref{sec.offline} describes the offline variant of the ABFT method. 
Section~\ref{sec.experiments} presents the experimental results, followed by the conclusion and ideas for future work outlined in Section~\ref{sec.conclusion}.

\section{Related Work}
\label{sec.related}

Several approaches have been developed to detect and correct SDCs in scientific applications.
Triple modular redundancy (TMR)~\cite{Lyons1962} is the most general and non-intrusive approach to detect and correct soft errors in scientific applications.
In general, detectors that either employ \emph{full replication} of entire applications~\cite{Fiala12Detection} or \emph{selective replication} of parts of an application~\cite{Berrocal17Toward} offer the highest detection rate with the lowest number of false-positives.
However, such approaches are prohibitively expensive in terms of additional required computing resources and time.


ABFT has primarily been developed for linear-algebra kernels \cite{Kuang1984,bosilca2009algorithm,Shantharam2012,onlineABFT,ABFTDenseMatrixFactorization,NewSumOnlineABFT}.
Another ABFT scheme has recently been proposed to correct soft errors online in the Fast Fourier Transform~\cite{CorrectingFFT}.
An Algorithm-Based Error Detection (ABED) method for multigrid solvers is studied in~\cite{ABFTMultigrid}. 

An error correction mechanism for stencils has previously been proposed~\cite{ABFTStencil}, wherein authors exploits stencil locality to reduce the re-execution cost after an error has been detected.
However, no error detector was provided.

Silent error detectors based on \emph{data analytics} use several interpolation techniques, such as time series prediction~\cite{Berrocal15Lightweight} 
and multivariate interpolation~\cite{PPoPP14,Gomez15Detecting,Gomez15Exploiting}, to \emph{interpolate} the next value of a computational point based on spatial and temporal data smoothness.
In particular, a method based on multivariate interpolation has been proposed to predict, detect, and correct SDCs in stencil computations~\cite{Gomez15Detecting}.
While their approach is subject to false-positives under chaotic phenomena (e.g. shockwaves), the authors show that their method can detect and correct errors with a magnitude above $10^{-2}$ for a single data point, without mentioning the associated overhead in terms of execution time.
In contrast, \emph{our method accurately detects and corrects errors with a magnitude above $10^{-5}$, independently of the simulated phenomenon}.
Furthermore, our method does not raise any false-positives and achieves an overhead of less than $8\%$ compared to the unprotected application.


\section{Online ABFT for Stencils}
\label{sec.abft2d}


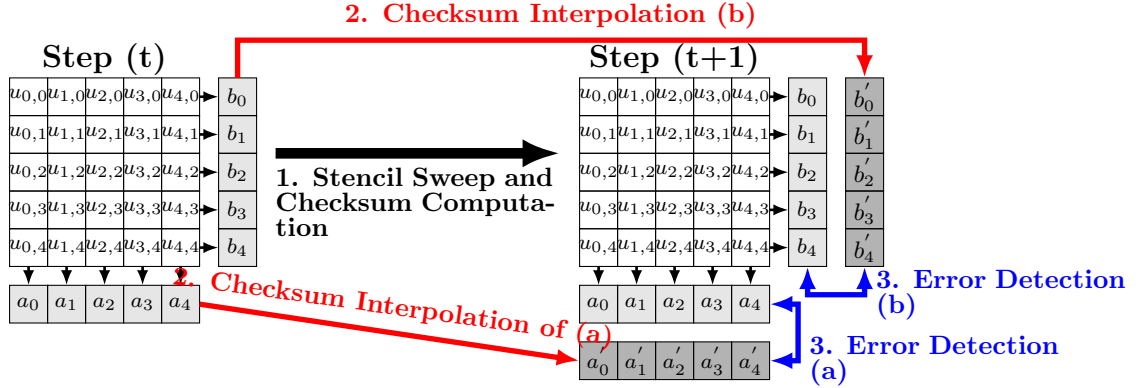
\begin{figure*}[!htb]
\begin{center}
\begin{tikzpicture}[scale=0.5]
  \foreach \x in {0,...,4}
    \foreach \y in {0,...,4} 
       {\pgfmathtruncatemacro{\ly}{4-\y}
       \node [] at (\x+0.5,\y+0.5) {\footnotesize$u_{\x,{\ly}}$};}

  \foreach \x in {0,...,4}
    \foreach \y [count=\yi] in {0,...,4}  
      \draw (\x,\y) rectangle (\x+1,\y+1);

  \foreach \x in {0,...,4}
       \node [] at (\x+0.5,-1) {\footnotesize$a_{\x}$};
  \foreach \x in {0,...,4}
      \draw[fill=black, fill opacity=0.1] (\x,-1.5) rectangle (\x+1,-0.5);

  \foreach \y in {0,...,4}
  	   {\pgfmathtruncatemacro{\ly}{4-\y}
       \node [] at (6,\y+0.5) {\footnotesize$b_{\ly}$};}
  \foreach \y in {0,...,4}
      \draw[fill=black, fill opacity=0.1] (5.5,\y) rectangle (6.5,\y+1);

  \foreach \x in {0,...,4}
      \draw [black,line width=1pt,{}-{Latex[length=2mm]}] (\x+0.5,0) -- (\x+0.5,-0.5);
  \foreach \y in {0,...,4}
      \draw [black,line width=1pt,{}-{Latex[length=2mm]}] (5,\y+0.5) -- (5.5,\y+0.5);

\def\gx{15}
	
	\draw (2.5,5.5) node[font=\footnotesize\bfseries]{\large Step (t)};
	\draw (\gx+2.5,5.5) node[font=\footnotesize\bfseries]{\large Step (t+1)};

  \foreach \x in {0,...,4}
    \foreach \y in {0,...,4} 
       {\pgfmathtruncatemacro{\ly}{4-\y}
       \node [] at (\gx+\x+0.5,\y+0.5) {\footnotesize$u_{\x,{\ly}}$};}

  \foreach \x in {0,...,4}
    \foreach \y [count=\yi] in {0,...,4}  
      \draw (\gx+\x,\y) rectangle (\gx+\x+1,\y+1);

  \foreach \x in {0,...,4}
       \node [] at (\gx+\x+0.5,-1) {\footnotesize$a_{\x}$};
  \foreach \x in {0,...,4}
      \draw[fill=black, fill opacity=0.1] (\gx+\x,-1.5) rectangle (\gx+\x+1,-0.5);

  \foreach \x in {0,...,4}
       \node [] at (\gx+\x+0.5,-1.5+-1) {\footnotesize$a^{'}_{\x}$};
  \foreach \x in {0,...,4}
      \draw[fill=black, fill opacity=0.3]  (\gx+\x,-1.5+-1.5) rectangle (\gx+\x+1,-1.5+-0.5);

  \foreach \y in {0,...,4}
  	   {\pgfmathtruncatemacro{\ly}{4-\y}
       \node [] at (\gx+6,\y+0.5) {\footnotesize$b_{\ly}$};}
  \foreach \y in {0,...,4}
      \draw[fill=black, fill opacity=0.1] (\gx+5.5,\y) rectangle (\gx+6.5,\y+1);

  \foreach \y in {0,...,4}
  	   {\pgfmathtruncatemacro{\ly}{4-\y}
       \node [] at (\gx+1.5+6,\y+0.5) {\small$b^{'}_{\ly}$};}
  \foreach \y in {0,...,4}
      \draw[fill=black, fill opacity=0.3] (\gx+1.5+5.5,\y) rectangle (\gx+1.5+6.5,\y+1);

    \foreach \x in {0,...,4}
      \draw [black,line width=1pt,{}-{Latex[length=2mm]}] (\gx+\x+0.5,0) -- (\gx+\x+0.5,-0.5);
    \foreach \y in {0,...,4}
      \draw [black,line width=1pt,{}-{Latex[length=2mm]}] (\gx+5,\y+0.5) -- (\gx+5.5,\y+0.5);

   \draw [red,line width=2pt,{}-{Latex[length=3mm]}] (5,-1.0) -- (15,-2.5)
   	node[midway,above,text=red,rotate=-8,font=\footnotesize\bfseries]{\normalsize 2. Checksum Interpolation of (a)};

   \draw [red,line width=2pt,{}-{Latex[length=3mm]}] (6,5.0) -- (6,6) -- (\gx+7.5,6) 
   	node[midway,above,text=red,rotate=0,font=\footnotesize\bfseries]{\normalsize 2. Checksum Interpolation (b)}
   	-- (\gx+7.5,5.0);

   \draw [blue,line width=2pt,{Latex[length=3mm]}-{Latex[length=3mm]}] (\gx+5,-1.0) -- (\gx+5.75,-1.0) -- (\gx+5.75,-2.5) 
   	node[right,text=blue,rotate=0,font=\footnotesize\bfseries,text width=3.3cm]{\normalsize 3. Error Detection (a)}
   	-- (\gx+5,-2.5);

   	\draw [blue,line width=2pt,{Latex[length=3mm]}-{Latex[length=3mm]}] (\gx+6,0) -- (\gx+6,-0.75) -- (\gx+7.5,-0.75) 
   	node[below,right,text=blue,rotate=0,font=\footnotesize\bfseries,text width=3.3cm]{\normalsize 3. Error Detection (b)}
   	-- (\gx+7.5,0);

   	\draw [black,line width=4pt,{}-{Latex[length=5mm]}] (7,3) -- (\gx-0.5,3)
   	node[midway,below,text=black,rotate=0,font=\footnotesize\bfseries,text width=3.75cm]{\normalsize 1. Stencil Sweep and Checksum Computation};


\end{tikzpicture}
\end{center}
\caption{Overview of the proposed ABFT approach on a 2D 5 $\times$ 5 stencil domain with both row and column checksum vectors $a$ and $b$. Assuming the initial checksum vectors at step t are correct, the approach has three steps: (1) perform the stencil sweep and compute the new checksums at step t+1, (2) interpolate the checksum vectors at step t+1 from those at step t, and (3) compare the checksum vectors to detect errors. The interpolation method is introduced in Section~\ref{sec.interpolation}.}
\label{fig.abft} 
\end{figure*}

This section introduces the online version of the proposed ABFT scheme for arbitrary stencil-based kernel computations on multi-dimensional grids.
The presentation of the proposed ABFT method in this section concentrates on 2D stencils.
Without loss of generality, the method is applicable to 3D stencils by simply applying the 2D scheme on every layer of the 3D domain.
The ABFT scheme can be applied online and offline, both to 2D and 3D stencils.
For 3D stencils, the complexity of the method increases linearly with the number of layers.

The proposed method, illustrated in Figure~\ref{fig.abft}, follows the main steps of most ABFT schemes: computing checksums, checking checksums, and using checksums for correcting errors when needed.

We introduce a set of \emph{general notation} to model arbitrary stencil-based kernel computations in Section~\ref{sec.sweep} and show \emph{how to compute the checksum vectors} in Section~\ref{sec.checksum}.
Due to the fact that such checksums are not invariant, it is not possible to directly compare their values between two stencil iterations. 
We solve this problem by introducing a novel method to interpolate the new checksum vectors from those of the previous iteration.
Having both the original and the interpolated checksum vectors, we can directly compare their values, and raise an error if they differ.
We present the \emph{novel checksum interpolation method} in Section~\ref{sec.interpolation} and the \emph{error detection method} in Section~\ref{sec.detection}.
Finally, if an error is detected, both checksum vectors (row and column) must contain one erroneous value, the position of this erroneous value giving out the exact location of the error in the computational domain. 
From there, it is possible to retrieve the correct value by subtracting the erroneous value from both checksum vectors and by solving the resulting equation. 
This \emph{error correction mechanism} is presented in Section~\ref{sec.correction}.

\subsection{Stencil Sweep}
\label{sec.sweep}

We introduce a set of general notation to describe an arbitrary \emph{stencil sweep operation} on a 2D domain or sub-domain (chunk or block) of size $n_x \times n_y$. 
Let $u_{x,y}^{(t)}$ denote the state of the stencil function $u$ at coordinate $(x,y)$ and at iteration $t$, and let $\{i,j,w\} \in S$ denote the set of stencil points with relative coordinates $(i,j)$ and weight coefficient $w$. 
For example, a 2D \mbox{4-point} stencil that computes the average of its four neighbors would be defined as:
$S = \{(0,-1,0.25), (-1,0,0.25), (1,0,0.25), (0,1,0.25)\}$. 
A stencil sweep operation consists of updating each stencil (grid) point by applying the stencil operator as follows:
\begin{align}
\label{eq.update}
u_{x,y}^{(t+1)} &= C_{x,y} + \sum_{\{i,j,w\}\in S} w \cdot u_{x+i,y+j}^{(t)} \ ,
\end{align}
where for each point, we perform a weighted summation of the neighboring points and add an optional constant term $C_{x,y}$ to account for arbitrary stencil computations (e.g., localized heat source or sink). Note that the weight coefficient $w$ is individual to every stencil point.

\subsection{Checksum Computation}
\label{sec.checksum}

The first step to achieving ABFT is to compute checksums in two separate checksum vectors (one for the domain rows and one for the domain columns).
Let $a$ and $b$ denote the row and the column checksum vectors, respectively. 
We write:
\begin{align}
a_{x}^{(t+1)} &= \sum_{y=0}^{n_y} u_{x,y}^{(t+1)} \label{eq.checksum_x} \\
b_{y}^{(t+1)} &= \sum_{x=0}^{n_x} u_{x,y}^{(t+1)} \label{eq.checksum_y}\ ,
\end{align}
where $x$ and $y$ indicate the row index and the column index in the row and column checksum vectors, respectively, and $t$ denotes the stencil iteration number.

\medskip 
\noindent\textbf{Implementation.} 
Stencil kernels typically consist of a few optimized operations repeated numerous times.
Adding even a single operation can have a significant impact on the performance of the stencil-based scientific application. 
In Section~\ref{sec.detection}, we show that a single checksum vector is enough for SDC detection, and that the second checksum vector only needs to be computed in case of error, to enable correction. 
We recommend that only one checksum vector is computed during the stencil sweep. 
This can efficiently be realized by adding a single addition operation to the kernel operation and ensuring that the operation minimizes the number of cache misses, as shown in Figure~\ref{fig.sweep}.

\medskip 
\noindent\textbf{Limitations.}
In a parallel environment, each thread must be assigned a different checksum vector. Fortunately, the online approach only requires one checksum vector to be computed, allowing up $n_x$ threads to work in parallel (as opposed to $n_x \times n_y$). The same is true for 3D stencil computations, where up to $n_z \times n_x$ threds can work in parallel. Note that this does not represent a major limitation for sufficiently large domains.

\begin{figure}[!htb]
\begin{center}
\begin{lstlisting}
<@\texttt{\textcolor{blue}{// Parallel stencil sweep and checksum computation}}@> 
#pragma omp parallel for
for(int y = 0; y < ny; y++) {
  for(int x = 0; x < nx; x++) {
    int c, w, e, n, s;
    c =  x + y * nx + z * nx * ny; // <@\texttt{center}@>
    w = (x == 0)    ? c : c - 1;   // <@\texttt{west}@>
    e = (x == nx-1) ? c : c + 1;   // <@\texttt{east}@>
    n = (y == 0)    ? c : c - nx;  // <@\texttt{north}@>
    s = (y == ny-1) ? c : c + nx;  // <@\texttt{south}@>
    u[t+1][c] = C[c] + w1 * u[t][c] + w2 * u[t][w] 
        + w3 * u[t][e] + w4 * u[t][s] + w5 * u[t][n];
    <@\texttt{\textcolor{blue}{// Column checksum computation}} @> 
    <@\texttt{\textcolor{blue}{b[t+1][y] += u[t+1][c];}} @> 
  }
}
\end{lstlisting}
\end{center}
\caption{Example implementation of a 2D five-point stencil sweep operation (lines 11-12), augmented by the computation of the column checksum vector \texttt{b} (line 14) for the proposed ABFT method. Note that lines 7-10 represent the boundary conditions that define the stencil behavior at the border of the computational domain. Terms \texttt{w1}, \texttt{w2}, $\ldots$, \texttt{w5} represent individual stencil weight \flo{coefficients} and can take arbitrary values.}
\label{fig.sweep}
\end{figure}


\subsection{Checksum Interpolation}
\label{sec.interpolation}

We show next how to interpolate the checksum vectors at stencil iteration $t+1$ from the checksum vectors at iteration $t$. 
The interpolation is made by \emph{applying the original 2D stencil kernel to the 1D checksum vectors}. 
The following theorem formally describes this process. 
Note the similarity with Equation~\eqref{eq.update}.

\begin{theorem}
\label{th.checksum_rec}
The checksum vectors at stencil iteration $t+1$ of a 2D stencil domain $u$ that is swept according to Equation~\eqref{eq.update} can be computed from the checksum vectors at iteration $t$ as follows:
\begin{align}
&a_{x}^{(t+1)} = c_x + \sum_{\{i,j,w\}\in S} w \left(a_{x+i}^{(t)} + \alpha_{x+i,j}^{(t)}\right) \label{eq.checksum_rec_x},\\
&b_{y}^{(t+1)} = c_y + \sum_{\{i,j,w\}\in S} w \left(b_{y+j}^{(t)} + \beta_{i,y+j}^{(t)}\right) \label{eq.checksum_rec_y}
\end{align}
with $c_x = \sum_{y=0}^{n_y} C_{x,y}$ and the boundary conditions:
\begin{align*}
\alpha_{x,j} &=\left\{
\begin{array}{ll}
  \sum_{y=j}^{-1} u_{x,y}^{(t)} - \sum_{y={n_y+j}+1}^{n_y} u_{x,y}^{(t)} & \text{if } j < 0, \\
  \sum_{y=n_y+1}^{n_y+j} u_{x,y}^{(t)} - \sum_{y=0}^{j-1} u_{x,y}^{(t)} & \text{if } j > 0,
\end{array}\right.\\
\beta_{i,y} &=\left\{
\begin{array}{ll}
  \sum_{x=i}^{-1} u_{x,y}^{(t)} - \sum_{x={n_x+i+1}}^{n_x} u_{x,y}^{(t)} & \text{if } i < 0, \\
  \sum_{x=n_x+1}^{n_x+i} u_{x,y}^{(t)} - \sum_{x=0}^{i-1} u_{x,y}^{(t)} & \text{if } i > 0.
\end{array}\right.
\end{align*}
where $S$ is the set of stencil points ${(i,j)}$ each characterized by its weight coefficient $w$ and by its relative position $i$ and $j$ to the $x$ and $y$ coordinates, respectively.
Computing the checksum vectors takes at most $O(k n_y)$ and $O(k n_x)$ computational time and requires at most $O(n_x)$ and $O(n_y)$ additional memory space, respectively, where $k = |S|$ denotes the number of stencil points.
\end{theorem}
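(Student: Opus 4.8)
The plan is to derive \eqref{eq.checksum_rec_x} directly from the definitions by substituting the stencil update into the checksum and rearranging, and then to obtain \eqref{eq.checksum_rec_y} by symmetry. First I would start from the definition $a_{x}^{(t+1)} = \sum_{y=0}^{n_y} u_{x,y}^{(t+1)}$ in \eqref{eq.checksum_x} and insert the update rule \eqref{eq.update}. Because the inner summation over the stencil set $S$ is finite, I can exchange the two summations and split off the constant term: the constant part contributes $\sum_{y=0}^{n_y} C_{x,y} = c_x$, and each stencil point $\{i,j,w\}$ leaves the weighted shifted partial sum $w \sum_{y=0}^{n_y} u_{x+i,y+j}^{(t)}$.

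The core of the argument is the identity
\begin{equation*}
\sum_{y=0}^{n_y} u_{x+i,y+j}^{(t)} = a_{x+i}^{(t)} + \alpha_{x+i,j}^{(t)} .
\end{equation*}
I would prove it by reindexing $y \mapsto y+j$, so the sum runs over $y' = j,\ldots,n_y+j$, and then comparing this range against the full range $0,\ldots,n_y$ that defines $a_{x+i}^{(t)}$. The two cases $j>0$ and $j<0$ follow by bookkeeping which rows are dropped and which are added: for $j>0$ the shift discards the first $j$ rows and picks up $j$ rows beyond one boundary, and for $j<0$ it does the reverse. Matching these discrepancies term by term against the definition of $\alpha_{x,j}$, and being careful with the summation limits (for instance that $n_y-|j|+1 = n_y+j+1$), establishes the identity; substituting it back into the rearranged expression gives \eqref{eq.checksum_rec_x}. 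The symmetric statement \eqref{eq.checksum_rec_y} for $b_{y}^{(t+1)}$ then follows from the same computation with the roles of $x$ and $y$, of $i$ and $j$, and of $n_x$ and $n_y$ interchanged, producing the correction term $\beta_{i,y+j}^{(t)}$.

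The step I expect to be the main obstacle is the careful handling of the boundary correction terms. The shifted sum references values $u_{x,y}^{(t)}$ with $y$ outside $\{0,\ldots,n_y\}$; these are exactly the ghost values resolved by the stencil's boundary conditions, so I must treat them consistently with the way \eqref{eq.update} accesses out-of-range neighbors, and verify that the off-by-one summation bounds defining $\alpha$ and $\beta$ account for them exactly in each sign case. Once the algebraic identity is in place, the complexity claims follow by a routine count: each entry of the new checksum vector is obtained from a weighted sum over the $k=|S|$ stencil points together with boundary corrections whose size is bounded by the stencil radius, and only the checksum vector itself must be stored, which yields the stated time and memory bounds.
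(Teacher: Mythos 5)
Your proposal follows essentially the same route as the paper's proof: expand the checksum definition using the update rule, exchange the finite sums to isolate $c_x$, reindex the shifted sum and case-split on the sign of $j$ to identify the correction term $\alpha_{x+i,j}$, then appeal to symmetry for $b$ and count operations for the complexity bounds. The key identity you isolate, $\sum_{y=0}^{n_y} u_{x+i,y+j}^{(t)} = a_{x+i}^{(t)} + \alpha_{x+i,j}^{(t)}$, is exactly the step the paper proves (Equation~\eqref{eq.temp_uxi}), so the argument is correct and matches the paper's.
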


\begin{proof}
The goal of this proof is to show that we can recursively express the row checksum vector $a_{x}^{(t+1)}$ as a function of $a_{x}^{(t)}$. 
The idea is to show that we can apply the 2D stencil kernel to the checksum vectors from iteration $t$ to obtain the checksum vectors at iteration $t+1$. 
We start by expanding Equation~\eqref{eq.checksum_x}. 
Using the definition of the stencil sweep operation in Equation~\eqref{eq.update}, we write:
\begin{align*}
a_{x}^{(t+1)} &= \sum_{y=0}^{n_y} \left(C_{x,y} + \sum_{\{i,j,w\}\in S} w \cdot u_{x+i,y+j}^{(t)}\right).
\end{align*}
Then, by rearranging the terms in the above equation, we obtain:
\begin{align*}
a_{x}^{(t+1)} &= \sum_{y=0}^{n_y} C_{x,y} + \sum_{\{i,j,w\}\in S} \left(\sum_{y=0}^{n_y} w \cdot u_{x+i,y+j}^{(t)}\right) \\
&=c_x + \sum_{\{i,j,w\}\in S} \left(w \cdot \sum_{y=0}^{n_y} u_{x+i,y+j}^{(t)}\right) \ ,
\end{align*}
with $c_x = \sum_{y=0}^{n_y} C_{x,y}$. 
Then, rewriting the second summation, we obtain:
\begin{align}
\label{eq.checksum_rearranged}
&= c_x + \sum_{\{i,j,w\}\in S} \left(w \cdot \sum_{y=j}^{n_y+j} u_{x+i,y}^{(t)}\right) \ .
\end{align}
From here, the goal is to rewrite $\sum_{y=j}^{n_y+j} u_{x+i,y}^{(t)}$ to retrieve $\sum_{y=0}^{n_y} u_{x+i,y}^{(t)}$, a.k.a. $a_{x+i}^{(t)}$. 
We distinguish the following three cases:
\begin{align*}
&\sum_{y=j}^{n_y+j} u_{x+i,y}^{(t)} = \\
&\qquad\left\{
\begin{array}{ll}
  \begin{array}{r}
  \sum_{y=j}^{-1} u_{x,y}^{(t)} + a_{x+i}^{(t)} - \sum_{y={n_y+j+1}}^{n_y} u_{x,y}^{(t)} 
  \end{array} & \text{if } j < 0, \\
  \begin{array}{r}
  \sum_{y=0}^{n_y} u_{x,y}^{(t)} + a_{x+i}^{(t)} - \sum_{y=0}^{j-1} u_{x,y}^{(t)} 
  \end{array} & \text{if } j > 0, \\
  \begin{array}{r}
  a_{x+i}^{(t)}
  \end{array} & \text{otherwise.} \\
\end{array}\right.
\end{align*}
Rearranging the terms above, we obtain:
\begin{align}
&\sum_{y=j}^{n_y+j} u_{x+i,y}^{(t)} = a_{x+i}^{(t)} + \alpha_{x+i,j}
\label{eq.temp_uxi}
\end{align}
with:
\begin{align*}
&\alpha_{x,j} =\left\{
\begin{array}{ll}
  \sum_{y=j}^{-1} u_{x+i,y}^{(t)} - \sum_{y={n_y+j}+1}^{n_y} u_{x+i,y}^{(t)} & \text{if } j < 0, \\
  \sum_{y=n_y+1}^{n_y+j} u_{x+i,y}^{(t)} - \sum_{y=0}^{j-1} u_{x+i,y}^{(t)} & \text{if } j > 0 .
\end{array}\right.
\end{align*}
Finally, substituting $\sum_{y=j}^{n_y+j} u_{x+i,y}^{(t)}$ in Equation~\eqref{eq.checksum_rearranged} by Equation~\eqref{eq.temp_uxi}, we retrieve Equation~\eqref{eq.checksum_rec_x}. The same approach is used to derive Equation~\eqref{eq.checksum_rec_y} for the column checksum vector $b$.

\medskip
\noindent\textbf{Complexity.} The number of operations necessary to compute $\alpha_{x+i,j}$ depends on the size of the stencil. 
It requires at most $O(k)$ computational time, where $k = |S|$ denotes the number of stencil points. 
It takes $O(n_y)$ time to compute $c_x$, which is constant and can be pre-computed. 
Overall, computing the row checksum vector $a$ requires at most $O(n_y k^2)$ computational time and $O(n_x)$ additional memory space.
The same complexity analysis can be made for computing and storing the column checksum vector $b$, which concludes the proof.
\end{proof}

\medskip
\noindent\textbf{Dealing with Boundary Conditions.}
Equation~\eqref{eq.checksum_rec_x} and Equation~\eqref{eq.checksum_rec_y} describe the interpolation process to compute the row and column checksum vectors, respectively. 
Note that compared to the original stencil operator (see Equation~\eqref{eq.update}), these equations have the additional term $\alpha$ for the row checksum and the additional term $\beta$ for the column checksum. 
These terms arise from the boundary conditions.

Indeed, due to the fixed computational pattern of the stencil, almost all stencil points arrive to have the same contribution to the final checksums. 
Almost every point in the domain is added the same fixed number of times to its corresponding row and column checksums.
This is true for all points in the domain, with the exception of the points that are subject to boundary conditions, where a slightly different stencil kernel is used (see for example the listing in Figure~\ref{fig.sweep}).

If the stencil is configured to use periodic or bounce-back boundary conditions, all points may end up having the same contribution factor to the row and vector checksums. 
In that case the terms $\alpha$ and $\beta$ are both zero, leading to the following simplified version of Equation~\eqref{eq.checksum_rec_x} and Equation~\eqref{eq.checksum_rec_y}, much like the original stencil operator:
\begin{align}
&a_{x}^{(t+1)} = c_x + \sum_{\{i,j,w\}\in S} w \cdot a_{x+i}^{(t)} \label{eq.checksum_rec_x_simple} \\
&b_{y}^{(t+1)} = c_y + \sum_{\{i,j,w\}\in S} w \cdot b_{y+j}^{(t)} \label{eq.checksum_rec_y_simple} \ .
\end{align}

Alternatively, if constant boundary values are used, then the first terms in $\alpha$ and $\beta$ become constant and can be pre-computed. 
If empty boundaries are used (e.g., setting all points beyond the boundary of the computational domain to $0$), then the first terms in $\alpha$ and $\beta$ can simply be discarded.


\medskip 
\noindent\textbf{Implementation.} The listing in Figure~\ref{fig.interpolation} shows a possible implementation of the interpolation process for the column checksum vector $b$. 
The implementation is straightforward and follows Equation~\eqref{eq.checksum_rec_y_simple}. 
The original stencil kernel is slightly modified to work with the 1D checksum vectors. 
In this example, the kernel relies on a bounce-back boundary condition, effectively making $\alpha$ and $\beta$ equal to zero. Note that boundary conditions still need to be accounted for in the modified kernel.

\begin{figure}
\begin{center}
\begin{lstlisting}
<@\texttt{\textcolor{blue}{// Interpolation of the column checksum}}@>
for (int x=0; x<nx; x++) {
    int c, w, e, n, s;
    c = x;                         // <@\texttt{\textcolor{black}{center}}@> 
    w = (x == 0)    ? c : c - 1;   // <@\texttt{\textcolor{black}{west}}@> 
    e = (x == nx-1) ? c : c + 1;   // <@\texttt{east}@> 
    n = c;                         // <@\texttt{north}@> 
    s = c;                         // <@\texttt{south}@> 
    <@\texttt{\textcolor{blue}{bcheck[c] = c[x] + w1 * b[t][c] + w2 * b[t][c]}}@>
    <@\texttt{\textcolor{blue}{        + w3 * b[t][c] + w4 * b[t][s] + w5 * b[t][n];}}@>
}
<@\texttt{\textcolor{red}{//b[t+1]: computed as in previous listing}}@>
<@\texttt{\textcolor{red}{compare(b[t+1], bcheck); // Next: compare and detect}}@>
\end{lstlisting}
\end{center}
\caption{Implementation of the checksum interpolation (see Equation~\eqref{eq.checksum_rec_y_simple}). A slightly modified version of the 2D kernel from Figure~\ref{fig.sweep} is applied on the 1D checksum vector \texttt{b} from the previous iteration to obtain the new checksum vector \texttt{bcheck}, which can later be directly compared to the checksum vector \texttt{b} from the current iteration \texttt{(t+1)}.}
\label{fig.interpolation} 
\end{figure}

\subsection{Error Detection}
\label{sec.detection}

In this section, we present the error detection method. The following theorem is used to compare the new checksum vectors computed via Equations~\eqref{eq.checksum_x} and~\eqref{eq.checksum_y} against the interpolated checksum vectors computed via Equations~\eqref{eq.checksum_rec_x} and~\eqref{eq.checksum_rec_y}.

\begin{theorem}
Assuming all stencil points are correct at stencil iteration $t$, then for any stencil sweep operation based on Equation~\eqref{eq.update}, the interpolated checksums computed at iteration $t+1$ with Equations~\eqref{eq.checksum_rec_x} and~\eqref{eq.checksum_rec_y} are equal to the checksums computed at iteration $t+1$ directly from the output data according to Equations~\eqref{eq.checksum_x} and~\eqref{eq.checksum_y}, i.e., the following equalities hold:
\begin{align*}
a_{x}^{(t+1)} &= \sum_{y=0}^{n_y} u_{x,y}^{(t+1)} &= \sum_{\{i,j,w\}\in S} w \left(a_{x+i}^{(t)} + \alpha_{x+i,j}^{(t)}\right),\\
b_{y}^{(t+1)} &= \sum_{x=0}^{n_x} u_{x,y}^{(t+1)} &= \sum_{\{i,j,w\}\in S} w \left(b_{y+j}^{(t)} + \beta_{i,y+j}^{(t)}\right),
\end{align*}
if and only if there are no SDCs during the computation nor SDCs that cancel each other out.
\label{th.detection}
\end{theorem}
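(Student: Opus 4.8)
The plan is to read this statement as a corollary of Theorem~\ref{th.checksum_rec} together with a short accounting of how a corruption propagates into the two checksum vectors. The key observation is that the interpolated checksums on the right-hand sides are built \emph{only} from iteration-$t$ quantities ($a^{(t)}_{x+i}$, $\alpha^{(t)}_{x+i,j}$, and their column analogues), which the hypothesis declares correct; consequently they always equal the error-free value that the checksums at iteration $t+1$ \emph{should} take. The middle expressions, by contrast, sum the actually stored data at iteration $t+1$, which may be corrupted. I would establish the two directions of the ``if and only if'' separately.

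For the direction asserting that the absence of effective SDCs forces the equalities, I would note that if no corruption occurs during the sweep, then $u^{(t+1)}$ is exactly the image of the correct $u^{(t)}$ under Equation~\eqref{eq.update}, so Theorem~\ref{th.checksum_rec} applies verbatim and the directly computed checksums $\sum_{y} u_{x,y}^{(t+1)}$ and $\sum_{x} u_{x,y}^{(t+1)}$ coincide with the interpolated ones. Since the interpolated side never reads iteration-$t+1$ data, this half needs nothing beyond the correctness of iteration $t$.

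For the converse I would introduce the per-point error $\epsilon_{x,y} = \tilde u_{x,y}^{(t+1)} - u_{x,y}^{(t+1)}$, where $\tilde u$ denotes the stored value and $u$ the correct stencil output. Because the interpolated row checksum equals the true $\sum_{y} u_{x,y}^{(t+1)}$ while the directly computed one equals $\sum_{y} \tilde u_{x,y}^{(t+1)}$, their discrepancy at row index $x$ is exactly $\sum_{y} \epsilon_{x,y}$; symmetrically, the column discrepancy at index $y$ is $\sum_{x} \epsilon_{x,y}$. Hence all the stated equalities hold if and only if every row-sum and every column-sum of the error field $\epsilon$ vanishes. This is precisely the condition that either no SDC occurred or that the SDCs present cancel within each affected checksum, which makes the ``cancel each other out'' clause rigorous.

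The main obstacle is conceptual rather than computational: one must justify cleanly that the interpolated checksum is untainted by the iteration-$t+1$ corruption, i.e. that it represents the \emph{expected} value and not a recomputation of the possibly faulty data. I would make this explicit as the role of the inductive hypothesis that all points are correct at iteration $t$, so that across iterations the scheme only needs the previously verified checksums. A secondary point worth stating is that a single bit-flip changes exactly one $\epsilon_{x^{*},y^{*}}$ and therefore cannot self-cancel: it perturbs both $a_{x^{*}}$ and $b_{y^{*}}$ and is always detected. The undetectable case thus genuinely requires two or more errors aligned so that their contributions to a common row (and to a common column) sum to zero, which is the adversarial situation the theorem deliberately excludes.
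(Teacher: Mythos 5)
Your argument is correct, but it takes a genuinely different route from the paper's. The paper proves Theorem~\ref{th.detection} by an informal case enumeration over \emph{where} a single error can strike --- in the domain at iteration $t$ after the checksum was taken, in the domain at iteration $t+1$, in the directly computed checksum, or in the interpolated checksum --- and observes that in each case exactly one side of the comparison is perturbed, with the multi-error cancellation caveat appended at the end. You instead derive the statement algebraically as a corollary of Theorem~\ref{th.checksum_rec}: the interpolated side depends only on verified iteration-$t$ data and hence equals the true checksum, while the direct side sums the stored data, so the mismatch at row $x$ is exactly $\sum_{y}\epsilon_{x,y}$ for the error field $\epsilon_{x,y}=\tilde u_{x,y}^{(t+1)}-u_{x,y}^{(t+1)}$ (and symmetrically for columns). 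This buys a precise characterization of the undetectable set --- all row sums and all column sums of $\epsilon$ vanish --- which makes the paper's vague ``cancel each other out'' clause rigorous, and it cleanly justifies the single-bit-flip detectability claim. What your version omits, and what the paper's enumeration does cover, is the possibility that the SDC strikes one of the checksum vectors themselves (the stored $a^{(t+1)}$, $b^{(t+1)}$ or the interpolated $a^{(t+1)'}$, $b^{(t+1)'}$) rather than a domain point; such a corruption is not of the form $\sum_{y}\epsilon_{x,y}$, so you should add a sentence treating it as an additional additive perturbation of exactly one side of the comparison, which again breaks the equality unless it is zero. With that addendum your proof is complete and, in my view, tighter than the paper's.
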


\begin{proof}
We assume that the initial data at stencil iteration $t=0$ are correct and that the initial checksum is also correct.
Then, assuming that the stencil and the checksums at $t=i$ are also correct, we need to study the possible values of 
(1)~the checksum vectors computed from the updated stencil at $t=i+1$ using Equation~\eqref{eq.checksum_x} or Equation~\eqref{eq.checksum_y}), and 
(2)~the interpolated checksum vectors, computed from the previous (correct) checksum vectors at $t=i$ using Equations~\eqref{eq.checksum_rec_x} and~\eqref{eq.checksum_rec_y}:
\begin{itemize}
\item An error that occurs in the domain at $t=i$ (after the checksum at $t=i$ has been computed) will cause (1)~to be incorrect and (2)~to be correct.
\item An error that occurs in the domain at $t=i+1$ will cause (1)~to be incorrect, and (2)~to be correct.
\item An error that occurs in (1)~will cause (1)~to be incorrect and (2)~to be correct.
\item An error that occurs in (2)~will cause (1)~to be correct and (2)~to be incorrect.
\end{itemize}
Note that if several errors occur, both (1)~and (2)~may be incorrect but not equal, therefore, an error will still be detected. 
With very low probability, errors (e.g. bit-flips) might strike different points in of the domain in such a way that errors cancel each other out in the final checksums, rendering all errors undetectable. This concludes the proof.
\end{proof}

\medskip 
\noindent\textbf{Implementation.} Due to round-off or discretization errors when using the IEEE Standard 754 TM floating-point arithmetic\footnote{\url{http://ieeexplore.ieee.org/document/4610935/}}, checksums as defined in Equation~\eqref{eq.checksum_x}, \eqref{eq.checksum_y}, \eqref{eq.checksum_rec_x} and \eqref{eq.checksum_rec_y} may not be equal, even when no soft errors \flo{occurred} during execution\footnote{\ac{Phenomena such as subtractive cancellations, absorptions, or underflows can lead to accuracy losses.}}.

In this work, we compare the relative error between each element of the checksum vectors, $a_{x}^{(t+1)}$ and $b_{y}^{(t+1)}$, computed according to Equations~\eqref{eq.checksum_x} and~\eqref{eq.checksum_y}, respectively, against the checksum vectors $a_{x}^{(t+1)'}$ and $b_{y}^{(t+1)'}$ computed according to Equations~\eqref{eq.checksum_rec_x} and~\eqref{eq.checksum_rec_y}.
The relative error is computed as follows: 
\begin{align*}
\left|\frac{a_{x}^{(t+1)'}}{a_{x}^{(t+1)}} - 1\right| \ \text{and } \left|\frac{b_{y}^{(t+1)'}}{b_{y}^{(t+1)}} - 1\right| \ ,
\end{align*}
and an error flag is set if the relative error is higher than some \emph{detection threshold} $\epsilon$, which depends on the domain, chunk, or block size on which the method is applied. 
The approximation error proportionally increases with the domain size.


When an error is detected, its location (row and column) in the domain is given out by the value of $y$ and $x$, respectively, and can be stored in an array to be later used for correction.

As per Theorem~\ref{th.detection}, it is only necessary to perform the detection on one of the two checksums. 
If one or more errors are detected, only then it is necessary to interpolate the other checksum and to perform detection to obtain the location of the errors both in $x$ and $y$. 
The listing in Figure~\ref{fig.detect} shows the implementation corresponding to the error detection process.

\begin{figure}
\begin{center}
\begin{lstlisting}
<@\texttt{\textcolor{blue}{// Compare checksum vectors bcheck and b}}@>
for (int y=0; y<ny; y++) {
    float rel = fabs(bcheck[y]/b[t+1][y] - 1.0);
    <@\texttt{\textcolor{red}{//Detect error}}@>
    if(rel > EPSILON) {
      erry[error_count_y] = y;
      error_count_y++;
    }
}
\end{lstlisting}
\end{center}
\caption{Error detection is implemented by computing the relative error between the interpolated column checksum \texttt{bcheck} (see Figure~\ref{fig.interpolation}) and the newly computed column checksum \texttt{b} (see Figure~\ref{fig.sweep}). If the error is greater than some \texttt{EPSILON} value, an error is detected in row \texttt{y}. The same method is applied to the row checksum \texttt{a} to obtain the precise location of the error.}
\label{fig.detect} 
\end{figure}

\subsection{Error Correction}
\label{sec.correction}

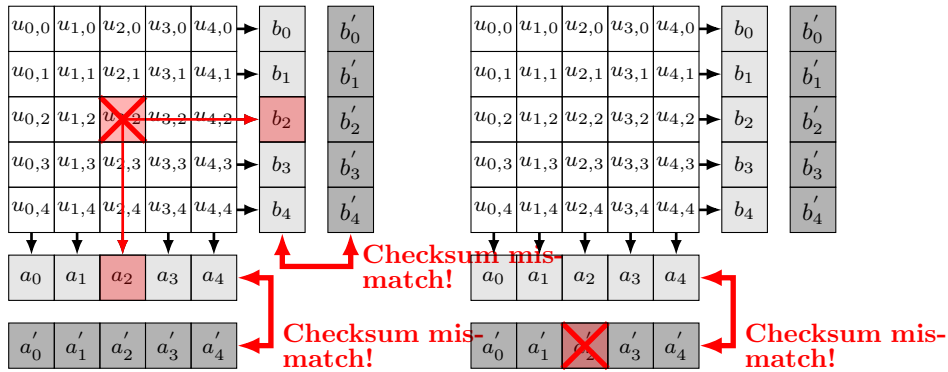
\begin{figure*}[!htb]
\begin{center}
\begin{subfigure}{0.492\linewidth}
\centering
\begin{tikzpicture}[scale=0.6]
\def\gx{0}
  \foreach \x in {0,...,4}
    \foreach \y in {0,...,4} 
       {\pgfmathtruncatemacro{\ly}{4-\y}
       \node [] at (\gx+\x+0.5,\y+0.5) {\footnotesize$u_{\x,{\ly}}$};}

  \foreach \x in {0,...,4}
    \foreach \y [count=\yi] in {0,...,4}  
      \draw (\gx+\x,\y) rectangle (\gx+\x+1,\y+1);

  \foreach \x in {0,...,4}
       \node [] at (\gx+\x+0.5,-1) {\footnotesize$a_{\x}$};
  \foreach \x in {0,...,4}
      \draw[fill=black, fill opacity=0.1] (\gx+\x,-1.5) rectangle (\gx+\x+1,-0.5);

  \foreach \x in {0,...,4}
       \node [] at (\gx+\x+0.5,-1.5+-1) {\footnotesize$a^{'}_{\x}$};
  \foreach \x in {0,...,4}
      \draw[fill=black, fill opacity=0.3] (\gx+\x,-1.5+-1.5) rectangle (\gx+\x+1,-1.5+-0.5);

  \foreach \y in {0,...,4}
  	   {\pgfmathtruncatemacro{\ly}{4-\y}
       \node [] at (\gx+6,\y+0.5) {\footnotesize$b_{\ly}$};}
  \foreach \y in {0,...,4}
      \draw[fill=black, fill opacity=0.1] (\gx+5.5,\y) rectangle (\gx+6.5,\y+1);

  \foreach \y in {0,...,4}
  	   {\pgfmathtruncatemacro{\ly}{4-\y}
       \node [] at (\gx+1.5+6,\y+0.5) {\small$b^{'}_{\ly}$};}
  \foreach \y in {0,...,4}
      \draw[fill=black, fill opacity=0.3] (\gx+1.5+5.5,\y) rectangle (\gx+1.5+6.5,\y+1);

    \foreach \x in {0,...,4}
      \draw [black,line width=1pt,{}-{Latex[length=2mm]}] (\gx+\x+0.5,0) -- (\gx+\x+0.5,-0.5);
    \foreach \y in {0,...,4}
      \draw [black,line width=1pt,{}-{Latex[length=2mm]}] (\gx+5,\y+0.5) -- (\gx+5.5,\y+0.5);

   \draw [red,line width=2pt,{}-{}] (\gx+2,3) -- (\gx+3,2);
   \draw [red,line width=2pt,{}-{}] (\gx+2,2) -- (\gx+3,3);

   \draw [red,line width=1pt,{}-{Latex[length=2mm]}] (\gx+2.5,2.5) -- (\gx+2.5,-0.5);
   \draw [red,line width=1pt,{}-{Latex[length=2mm]}] (\gx+2.5,2.5) -- (\gx+5.5,2.5);

   \draw [fill=red,fill opacity=0.3] (\gx+2,2) rectangle (\gx+3,3);
   \draw [fill=red,fill opacity=0.3] (\gx+2,-0.5) rectangle (\gx+3,-1.5);
   \draw [fill=red,fill opacity=0.3] (3.5+\gx+2,3) rectangle (3.5+\gx+3,2);

   \draw [red,line width=2pt,{Latex[length=3mm]}-{Latex[length=3mm]}] (\gx+5,-1.0) -- (\gx+5.75,-1.0) -- (\gx+5.75,-2.5) 
   	node[right,text=red,rotate=0,font=\footnotesize\bfseries,text width=3.3cm]{\normalsize Checksum mismatch!}
   	-- (\gx+5,-2.5);

   	\draw [red,line width=2pt,{Latex[length=3mm]}-{Latex[length=3mm]}] (\gx+6,0) -- (\gx+6,-0.75) -- (\gx+7.5,-0.75) 
   	node[below,right,text=red,rotate=0,font=\footnotesize\bfseries,text width=3.3cm]{\normalsize Checksum mismatch!}
   	-- (\gx+7.5,0);
\end{tikzpicture}
\caption{Scenario 1: error in the stencil domain at position (2,2).}
\end{subfigure}
\begin{subfigure}{0.492\linewidth}
\centering
\begin{tikzpicture}[scale=0.6]
\def\gx{0}

  \foreach \x in {0,...,4}
    \foreach \y in {0,...,4} 
       {\pgfmathtruncatemacro{\ly}{4-\y}
       \node [] at (\gx+\x+0.5,\y+0.5) {\footnotesize$u_{\x,{\ly}}$};}

  \foreach \x in {0,...,4}
    \foreach \y [count=\yi] in {0,...,4}  
      \draw (\gx+\x,\y) rectangle (\gx+\x+1,\y+1);

  \foreach \x in {0,...,4}
       \node [] at (\gx+\x+0.5,-1) {\footnotesize$a_{\x}$};
  \foreach \x in {0,...,4}
      \draw[fill=black, fill opacity=0.1] (\gx+\x,-1.5) rectangle (\gx+\x+1,-0.5);

  \foreach \x in {0,...,4}
       \node [] at (\gx+\x+0.5,-1.5+-1) {\footnotesize$a^{'}_{\x}$};
  \foreach \x in {0,...,4}
      \draw[fill=black, fill opacity=0.3] (\gx+\x,-1.5+-1.5) rectangle (\gx+\x+1,-1.5+-0.5);

  \foreach \y in {0,...,4}
  	   {\pgfmathtruncatemacro{\ly}{4-\y}
       \node [] at (\gx+6,\y+0.5) {\footnotesize$b_{\ly}$};}
  \foreach \y in {0,...,4}
      \draw[fill=black, fill opacity=0.1] (\gx+5.5,\y) rectangle (\gx+6.5,\y+1);

  \foreach \y in {0,...,4}
  	   {\pgfmathtruncatemacro{\ly}{4-\y}
       \node [] at (\gx+1.5+6,\y+0.5) {\small$b^{'}_{\ly}$};}
  \foreach \y in {0,...,4}
      \draw[fill=black, fill opacity=0.3] (\gx+1.5+5.5,\y) rectangle (\gx+1.5+6.5,\y+1);

    \foreach \x in {0,...,4}
      \draw [black,line width=1pt,{}-{Latex[length=2mm]}] (\gx+\x+0.5,0) -- (\gx+\x+0.5,-0.5);
    \foreach \y in {0,...,4}
      \draw [black,line width=1pt,{}-{Latex[length=2mm]}] (\gx+5,\y+0.5) -- (\gx+5.5,\y+0.5);

   \draw [red,line width=2pt,{}-{}] (\gx+2,-3) -- (\gx+3,-2);
   \draw [red,line width=2pt,{}-{}] (\gx+2,-2) -- (\gx+3,-3);

   \draw [fill=red,fill opacity=0.3] (\gx+2,-2) rectangle (\gx+3,-3);

   \draw [red,line width=2pt,{Latex[length=3mm]}-{Latex[length=3mm]}] (\gx+5,-1.0) -- (\gx+5.75,-1.0) -- (\gx+5.75,-2.5) 
   	node[right,text=red,rotate=0,font=\footnotesize\bfseries,text width=3.3cm]{\normalsize Checksum mismatch!}
   	-- (\gx+5,-2.5);

\end{tikzpicture}
\caption{Scenario 2: error in the interpolated checksum vector b'.}
\end{subfigure}
\end{center}
\caption{Two example scenarios where (a) the error strikes in the stencil domain, and (b) the error strikes a checksum vector. In (a) the error location is deduced from the corrupted row and column checksum vectors and can be corrected. In (b), the error can be in either checksum vector $a$ or $a'$: since $a'$ is \flo{no longer needed for verification due to corruption}, only $a$ needs to be recomputed.}
\label{fig.abft.error} 
\end{figure*}

We present here the error correction method. 
Let $a_x^{(t+1)}$ and $b_x^{(t+1)}$ denote the row and column checksums computed via Equations~\eqref{eq.checksum_x} and~\eqref{eq.checksum_y}, respectively.
Let $a_x^{(t+1)'}$ and $b_x^{(t+1)'}$ denote the interpolated row and column checksum computed via Equations~\ref{eq.checksum_x} and~\ref{eq.checksum_y}.

Let $ex$ and $ey$ denote the coordinate of the corrupted stencil point, obtained from Section~\ref{sec.detection}. 
We first subtract the erroneous value $u_{ex,ey}^{(t+1)}$ from either one of the correct checksums. 
The correct value is then obtained by solving the following equation:
\begin{align}
\label{eq.correction}
correct_{ex,ey}^{(t+1)} &= a_{ex}^{(t+1)'} - (a_{ex}^{(t+1)} - u_{ex,ey}^{(t+1)}) \notag \\
&= b_{ey}^{(t+1)'} - (b_{ey}^{(t+1)} - u_{ex,ey}^{(t+1)}) \ .
\end{align}
Note that checksums also need to be updated with the correct value to maintain the correctness of subsequent stencil iterations.

\medskip 
\noindent\textbf{Implementation.} The implementation of error correction follows Equation~\ref{eq.correction}, as shown in the listing in Figure~\ref{fig.correction}. \ac{The correct value can be computed from either checksums. One can compute the \flo{checksum average} or arbitrarily \flo{use one} of the two.}

\begin{figure}
\begin{center}
\begin{lstlisting}
<@\texttt{\textcolor{blue}{// Identification of error location and correction}}@>
for(int i=0; i<error_count; i++) {
    <@\texttt{// Error location}@>
    const int ex = errx[i]; 
    const int ey = erry[i];
    const int c =  ex + ey * nx;

    <@\texttt{// Error correction}@>
    float vx = acheck[ex] - (a[t+1][ex] - u[t+1][c]);
    float vy = bcheck[ey] - (b[t+1][ey] - u[t+1][c]);
    float corrected = (vx + vy) / 2.0;

    <@\texttt{// Update}@>
    a[t+1][ex] += corrected - u[t+1][c];
    b[t+1][ey] += corrected - u[t+1][c];
    u[t+1][c] = corrected;
}
\end{lstlisting}
\end{center}
\caption{Implementation of error correction by subtracting the erroneous value from both checksum vectors and averaging the correct value, which is then injected back into the domain. Note that the checksums also need to be updated to maintain stencil correctness for the next iterations.}
\label{fig.correction} 
\end{figure}

\section{Offline ABFT for Stencils}
\label{sec.offline}

We extend the online version of the proposed ABFT scheme to also enable its use for offline SDCs detection after the application completes or with a given error detection period during the execution of the application.
Section~\ref{sec.offline.detection} describes the extended error detection method and Section~\ref{sec.offline.correction} discusses different error correction approaches.

\subsection{Offline Error Detection}
\label{sec.offline.detection}

Offline error detection is achieved by extending the interpolation process. 
Section~\ref{sec.interpolation} describes the equations used to interpolate the checksum vectors from the past iteration during the current iteration. 
Here, we show how to extend the checksum interpolation with a given period $\Delta$ using the standard checkpointing and recovery approach.
In the absence of errors, checkpoints can safely be performed every $\Delta$ iterations.
In case an error is detected at iteration $t$, we need to recover from the last correct checkpoint at iteration $t-\Delta$, and to begin recomputing from there.
This can be achieved via an iterative process, where Equations~\eqref{eq.checksum_rec_x} and~\eqref{eq.checksum_rec_y} are used to successively interpolate the checksum vectors from iteration $t$ up to $t+\Delta$.

Note that as a result, the asymptotic complexity for the interpolation process is $O(\Delta k n_y)$ and $O(\Delta k n_y)$ for the row and column checksum vectors, respectively. 
This means that the complexity of the detection algorithm for the offline ABFT method scales linearly with the detection period $\Delta$.
However, since error detection only needs to be performed once every $\Delta$ iterations, the cost per iteration is theoretically the same to that of the proposed online ABFT error detection method.

\medskip
\noindent\textbf{Implementation.} The offline ABFT error detection implementation, as shown in Figure~\ref{algo.offline.inteprolation}, assumes the initial checksum $b$ was computed from the last checksum. 
From there, the kernel from Figure~\ref{fig.interpolation} is successively applied $\Delta$ times until the interpolated checksum from the current iteration has been retrieved.
After this step, the same error detection method, as that illustrated in Figure~\ref{fig.detect}, is applied.

\begin{figure}
\begin{center}
\begin{lstlisting}[mathescape=true]
<@\texttt{\textcolor{blue}{bcheck = b[t+1]; // Save current checksum}}@> 
<@\texttt{\textcolor{blue}{\textbf{for} (t=last\_checkpoint; t<last\_checkpoint+dt; t++)\{}}@>
    for (int x=0; x<nx; x++) {
      int c, w, e, n, s;
      c = x;                       <@\texttt{// center}@> 
      w = (x == 0)    ? c : c - 1; <@\texttt{// west}@> 
      e = (x == nx-1) ? c : c + 1; <@\texttt{// east}@> 
      n = c;                       <@\texttt{// north}@> 
      s = c;                       <@\texttt{// south}@> 
      b[t+1][c] = c[x] + w1 * b[t][c] + w2 * b[t][c] 
        + w3 * b[t][c] + w4 * b[t][s] + w5 * b[t][n];
    }
<@\texttt{\textcolor{blue}{\}}}@>
<@\texttt{\textcolor{red}{compare(b[t+1], bcheck); // Next: compare and detect}}@> 
\end{lstlisting}
\end{center}
\caption{Offline checksum interpolation, applying the interpolation kernel from Figure~\ref{fig.interpolation} (lines 2-11) $\Delta$ times to retrieve the current checksum.}
\label{algo.offline.inteprolation}
\end{figure}

Note that this method is more subject to floating-point approximation errors, which may add up to a significant amount, depending on the value of $\Delta$ and on the domain size. 
However, in-memory checkpointing of the domain is relatively fast, which means the detection period, $\Delta$, can be kept small. 
Another approach to avoiding such approximation errors would be to increase the detection threshold $\epsilon$ as shown in Section~\ref{sec.detection} to avoid false-positives by lowering the detection rate.

\subsection{Offline Error Correction}
\label{sec.offline.correction}

Error correction can be achieved in this work by combining existing error correction techniques with the proposed error detection method. 
Checkpointing with rollback recovery~\cite{CL85,uncoordinated,FTI} is the de-facto general-purpose technique to recover from errors or failures in HPC.
Since \flo{the ABFT approach proposed in this work} can be used as a periodic error detector, one can safely checkpoint the state of the domain after a successful detection and correction. This will allow to recover from the state of the domain at that point in case of another error.
Recall that an error correction mechanism specifically dedicated to stencil computations has already been presented~\cite{ABFTStencil}, wherein authors exploit stencil locality to reduce the re-execution cost after an error has been detected. 
Such a method could be used to further lower the cost of re-computation in case of errors.
In the present work, we conduct experiments using the standard checkpoint and recovery method and leave alternatives error correction approaches for future work.

\section{Experiments}
\label{sec.experiments}

In this section, we perform a set of experiments to: 
(a)~assess the overhead of the proposed ABFT method and 
(b)~evaluate the accuracy of the error correction mechanisms.
For this purpose, we compare the performance and accuracy of the \emph{Online ABFT}, \emph{Offline ABFT}, and \emph{No-ABFT} approaches both in an error-free environment and when a silent error (e.g., a bit-flip) is injected during the computation.

These three approaches have been integrated in the HotSpot3D stencil code from the Rodinia benchmark suite\footnote{\url{http://lava.cs.virginia.edu/Rodinia/download_links.htm}}.
HotSpot3D is a widely used simulation tool to estimate processor temperature based on an architectural floorplan and simulated power measurements. 
The ABFT prototype implementation and the early results presented in this work are freely available upon request.

Section~\ref{sec.experiments.setup} describes the parameters used in the experiments while Section~\ref{sec.overhead} presents the main error detection and correction performance and accuracy results.
Section~\ref{sec.accuracy} provides a more in-depth analysis of the error detection and correction accuracy, showing the limitations of each method. 
Finally, Section~\ref{sec.interval} shows the impact of the error detection period on the performance of the \emph{Offline ABFT} method.

\subsection{Experimental Setup}
\label{sec.experiments.setup}

The HotSpot3D stencil code employs OpenMP for node-level parallelization and has built-in functions to measure both the execution time and the accuracy of the computational results. 
The \emph{arithmetic error} between the obtained results and the reference value is computed with an $\textit{l}^{2}$-norm and serves as the measure of accuracy (the reference value is obtained by using a single thread).
Let $v^{ref}$ and $v^{comp}$ denote the reference value and the computed results, respectively. 
Then the arithmetic error is defined as follows:
\begin{align}
\label{eq.l2norm}
error &= \sqrt{\sum_{i=0}^{n_x n_y n_z} \left(v_i^{ref}-v_i^{comp}\right)^2} \ .
\end{align}

Table~\ref{table.experiments.summary} summarizes the parameters used in the experiments.
Each experiment consists of $1,000$ and $100$ repetitions of the HotSpot3D stencil code with each of the three methods: \emph{Online ABFT}, \emph{Offline ABFT} and \emph{No-ABFT} (unprotected application). 
These ABFT methods are applied to 3D cubes (or tiles) of size $512\times512\times8$ and $64\times64\times8$ parallelized over $8$ OpenMP threads on Intel Xeon E5-2640 v4 nodes, where each thread handles one of the $8$ 2D layers of the 3D computational domain.

\ac{Note that each layer uses its own independent checksums and that the proposed ABFT method is applied independently within each layer. Therefore, performance and accuracy are only affected by the size of the tile.}
Choosing the right detection threshold (see Section~\ref{sec.detection}) is critical to avoid detecting false-positives due to small floating-point approximation errors during the interpolation step. \ac{Since the proposed approach is sensitive to the tile size, we propose using small tile sizes to limit the approximation error.}
For the selected tile sizes, setting the detection threshold to $10^{-5}$ was sufficiently high to ensure that no false-positives were reported, and sufficiently low to detect all errors up to the fifth decimal point. 

To assess the performance and accuracy of the ABFT method in an error-prone environment, we simulate SDCs by injecting a single bit-flip in the memory used by the application during the execution. 
The bit-flip is injected during a random stencil iteration ($0\hdots127$), in random point in the computational domain  ($0\hdots512x512x8-1$), and at a random bit position for the particular stencil point ($0\hdots31$).
To ensure that the injected bit-flip has an immediate and visible impact on the stencil results, the injection is performed during the stencil sweep operation, after the stencil point targeted for data corruption has been updated and before it is stored into the domain.

\begin{table}
\begin{center}
\begin{tabular}{|l | c | c |}
\hline
\multirow{ 2}{*}{\textbf{Parameter}} & \multicolumn{2}{|c|}{\textbf{Tile sizes}} \\
\cline{2-3}
& \textbf{64$\times$64$\times$8} & \textbf{512$\times$512$\times$8} \\
\hline
Stencil iterations & 128 & 256 \\
Experiment repetitions & 1,000 & 100 \\
Error detection threshold & $10^{-5}$ & $10^{-5}$ \\
Offline detection period & 16 iterations & 16 iterations \\
\hline
\end{tabular}
\end{center}
\caption{Overview of the main experimental parameters}
\label{table.experiments.summary}
\end{table}

\subsection{Performance and Accuracy Evaluation}
\label{sec.overhead}

In this section, we evaluate the performance and accuracy of the three approaches: \emph{Online ABFT}, \emph{Offline ABFT}, and \emph{\mbox{No-ABFT}}. \mbox{Figures}~\ref{fig.overhead.float} and~\ref{fig.error.float} show the mean execution time and the mean accuracy for the three methods, respectively, in an error-free and in an error-prone scenario with a single randomly injected bit-flip.

\begin{figure}[!htb]
  \begin{center}
      \begin{subfigure}{\linewidth}
      \centering
        \includegraphics[scale=0.52]{{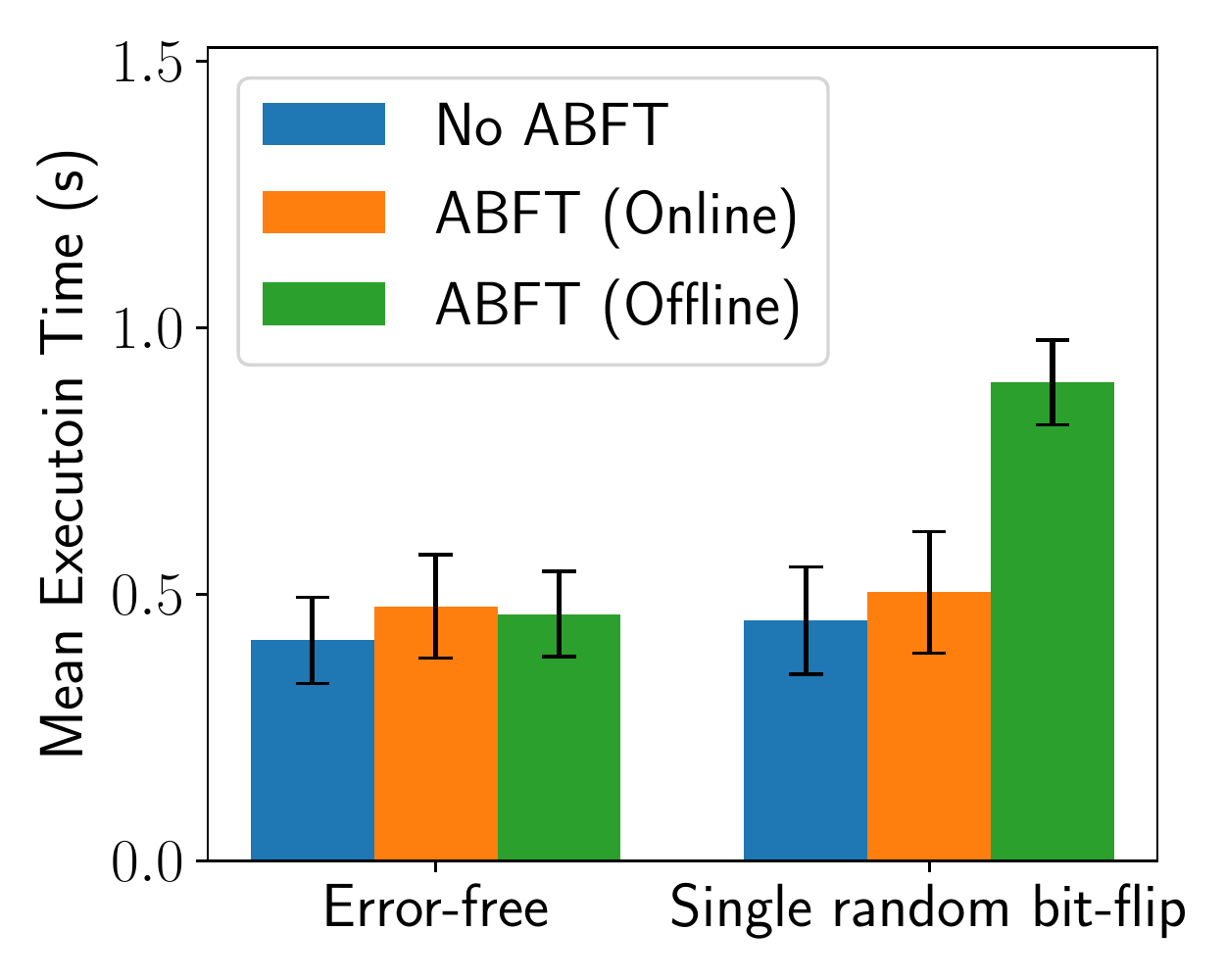}}
        \caption{Tile size \textbf{64$\times$64$\times$8}}
      \end{subfigure}
      \begin{subfigure}{\linewidth}
      \centering
        \includegraphics[scale=0.52]{{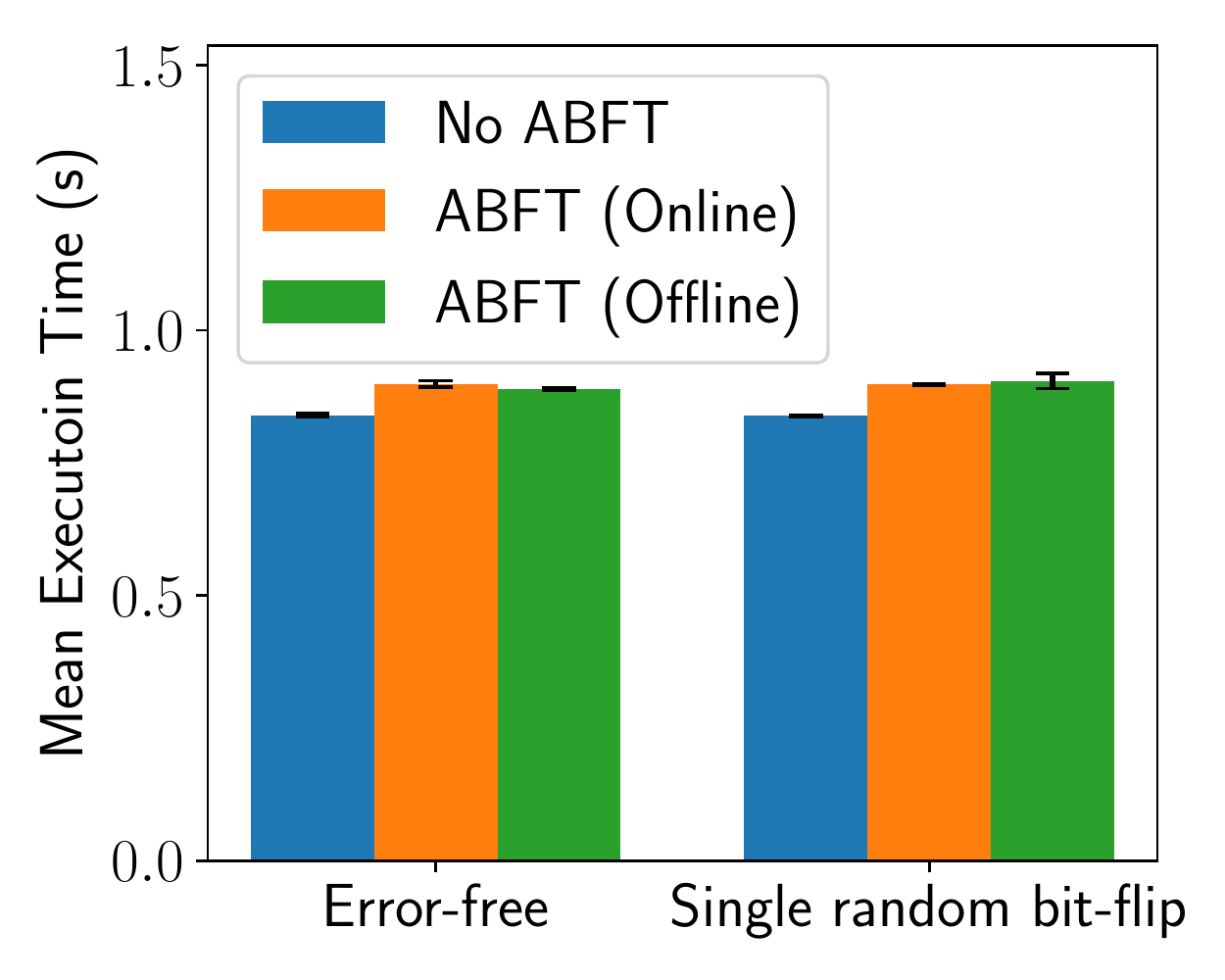}}
        \caption{Tile size \textbf{512$\times$512$\times$8}}
      \end{subfigure}
  \end{center}
  \caption{Mean execution time and standard deviation of the three ABFT methods for the HotSpot3D stencil with tiles of size (a)~\textbf{64$\times$64$\times$8} and (b)~\textbf{512$\times$512$\times$8}. The results show the error-free and single error cases during the execution, respectively.}
  \label{fig.overhead.float}
  \vspace{-0.1cm}
\end{figure}


\begin{figure}[!htb]
    \begin{center}
      \begin{subfigure}{\linewidth}
      	\centering
        \includegraphics[scale=0.52]{{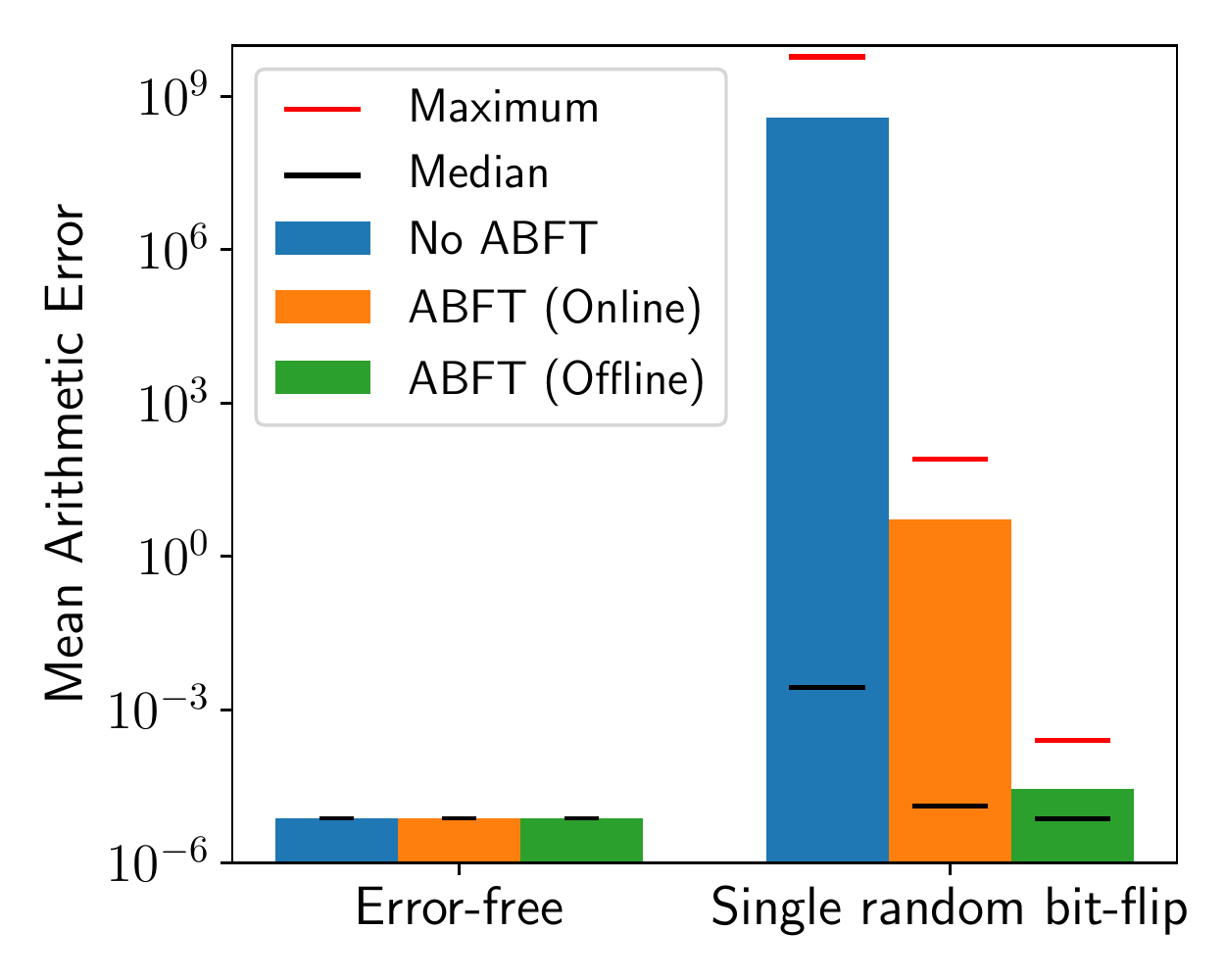}}
        \caption{Tile size \textbf{64$\times$64$\times$8}}
      \end{subfigure}
      \begin{subfigure}{\linewidth}
      	\centering
        \includegraphics[scale=0.52]{{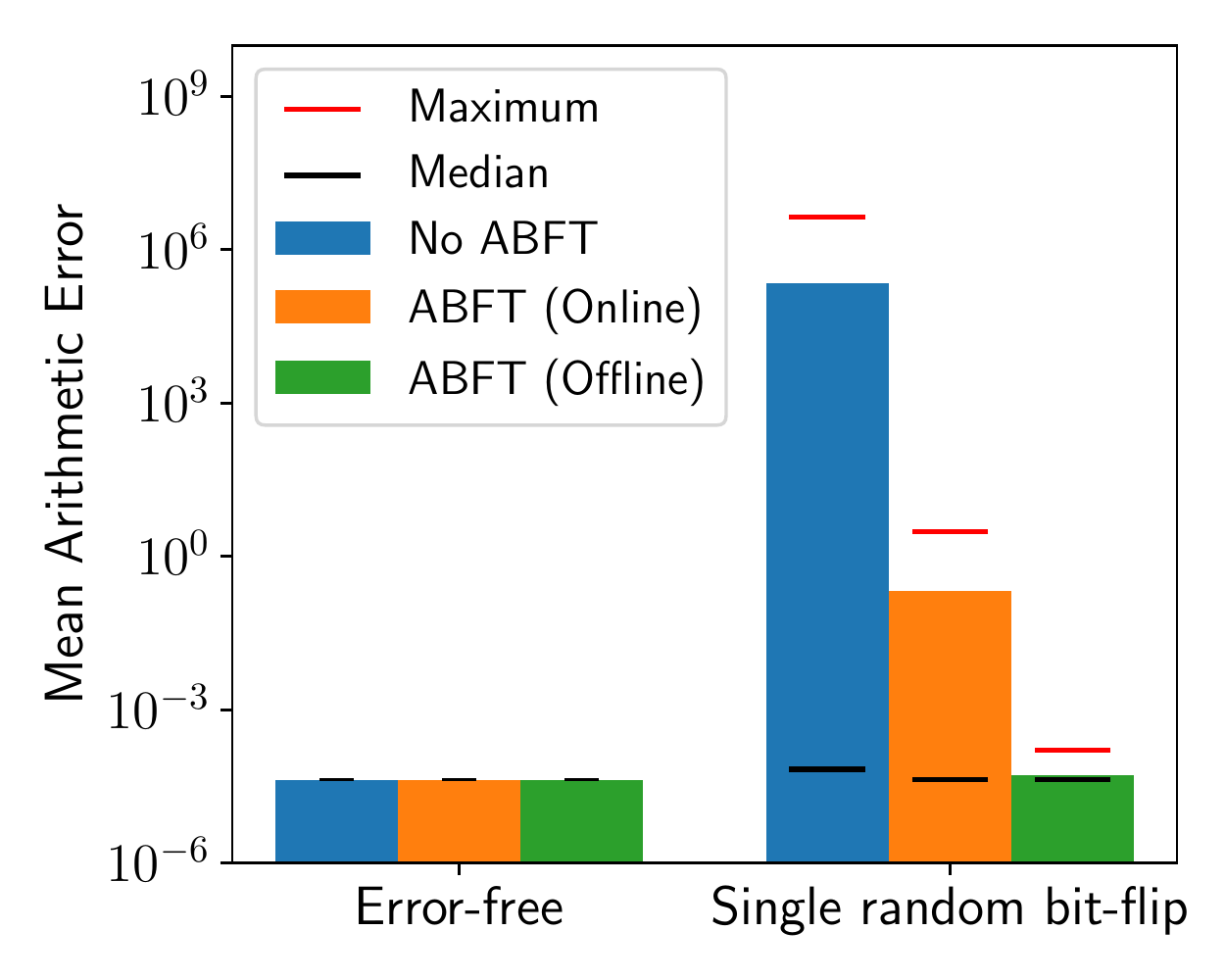}}
        \caption{Tile size \textbf{512$\times$512$\times$8}}
      \end{subfigure}
  \end{center}
  \caption{Mean, median, and maximum arithmetic error for the proposed ABFT methods for the HotSpot3D stencil with tiles of size (a)~\textbf{64$\times$64$\times$8} and (b)~\textbf{512$\times$512$\times$8}, in the error-free and error-prone (with a single random bit-flip) execution scenarios.
The arithmetic error, in both cases, is determined by comparing the results to the reference value (error-free, single threaded execution) using an $\textit{l}^{2}$-norm (see Equation~\eqref{eq.l2norm}).}
  \label{fig.error.float}
  \vspace{-0.1cm}
\end{figure}

Figure~\ref{fig.overhead.float} shows that in an error-free scenario, the proposed ABFT implementations are slightly more expensive than the original, non-protected implementation, with less than $8\%$ overhead for tile sizes $512\times512\times8$.
Note that the \emph{Offline ABFT} with a detection period of $16$ iterations and the \emph{Online ABFT} implementations have similar execution times, on average.
However, with a single randomly injected bit-flip during execution, the \emph{Offline ABFT} implementation becomes significantly slower compared to the \emph{Online ABFT} implementation.

Figure~\ref{fig.error.float} shows the corresponding arithmetic error on a logarithmic scale. 
In the error-free scenario, both methods lead to a total final error smaller than $10^{-5}$ compared to that of a single-threaded execution.
With a single bit-flip, the unprotected \emph{No-ABFT} original HotSpot3D implementation reaches very high mean and median error values, meaning that the final results are often corrupted beyond any acceptable threshold.
In comparison, our protected \emph{Online ABFT} and \emph{Offline ABFT} implementations in HotSpot3D keep the median error below $10^{-4}$.

While the \emph{Online ABFT} implementation yields acceptable mean and median error values, the \emph{Offline ABFT} scheme completely cancels out the error in most cases.
This is due to the chosen checkpointing and recovery strategy for the \emph{Offline ABFT}, which recomputes all values from the last checkpoint, while the \emph{Online ABFT} implementation corrects the values on-the-fly using the checksum vectors, which typically lead to a small approximation error.

\subsection{Evaluation of the Correction Accuracy}
\label{sec.accuracy}

In this section, experiments are conducted by fixing the position of the bit-flip within the 32-bit floating-point number representation. 
For each bit position, a bit-flip is randomly injected into the computational domain, during a random stencil iteration. 
In total, $1,000$ bit-flips have been injected per experiment, for every possible bit position.

\begin{figure}[!]
    \begin{center}
      \begin{subfigure}{\linewidth}
          \centering
        \includegraphics[scale=0.48]{{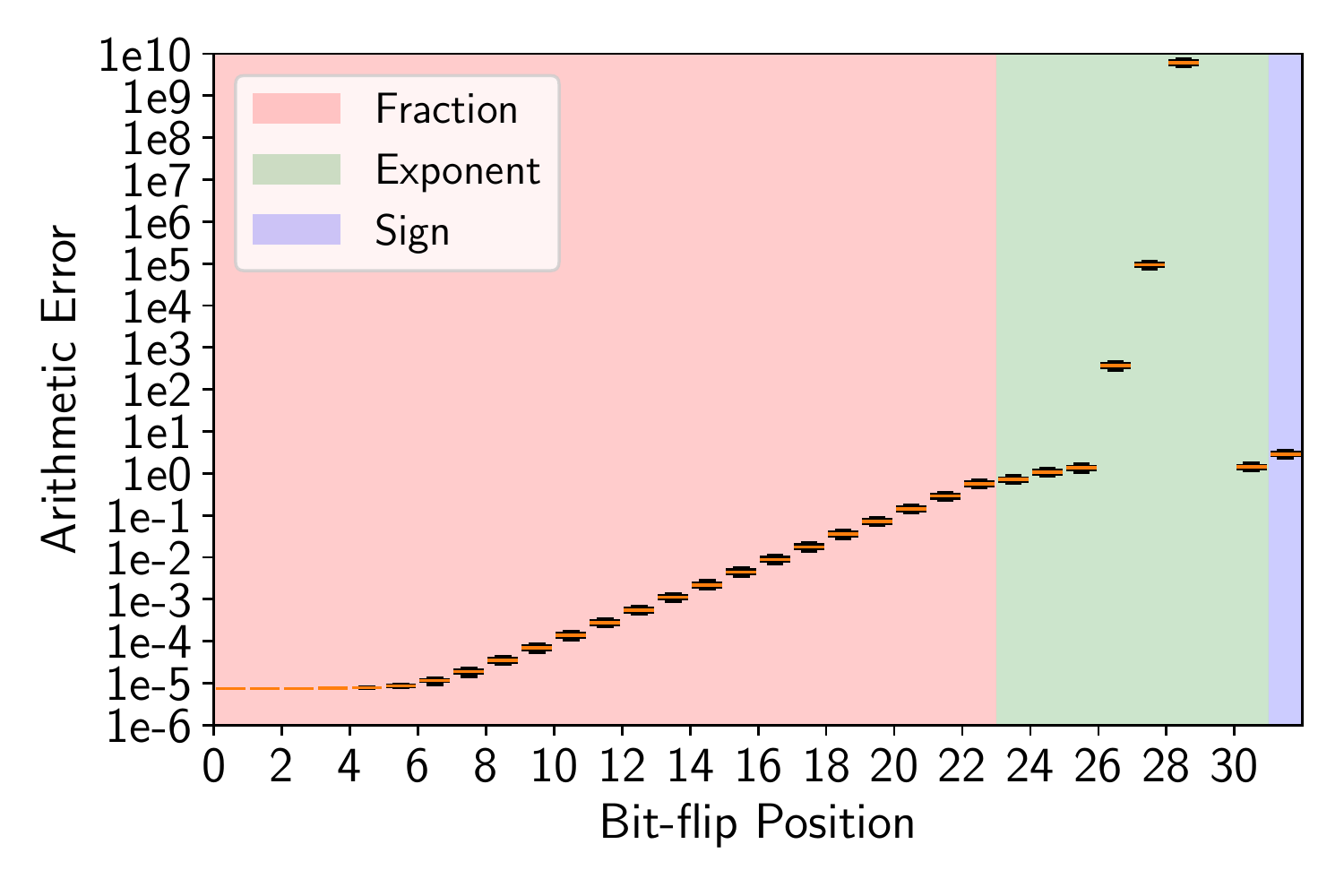}}\vspace{-0.35cm}
        \caption{No ABFT}\label{fig.biterror.noabft}
      \end{subfigure}
      \begin{subfigure}{\linewidth}
          \centering
        \includegraphics[scale=0.48]{{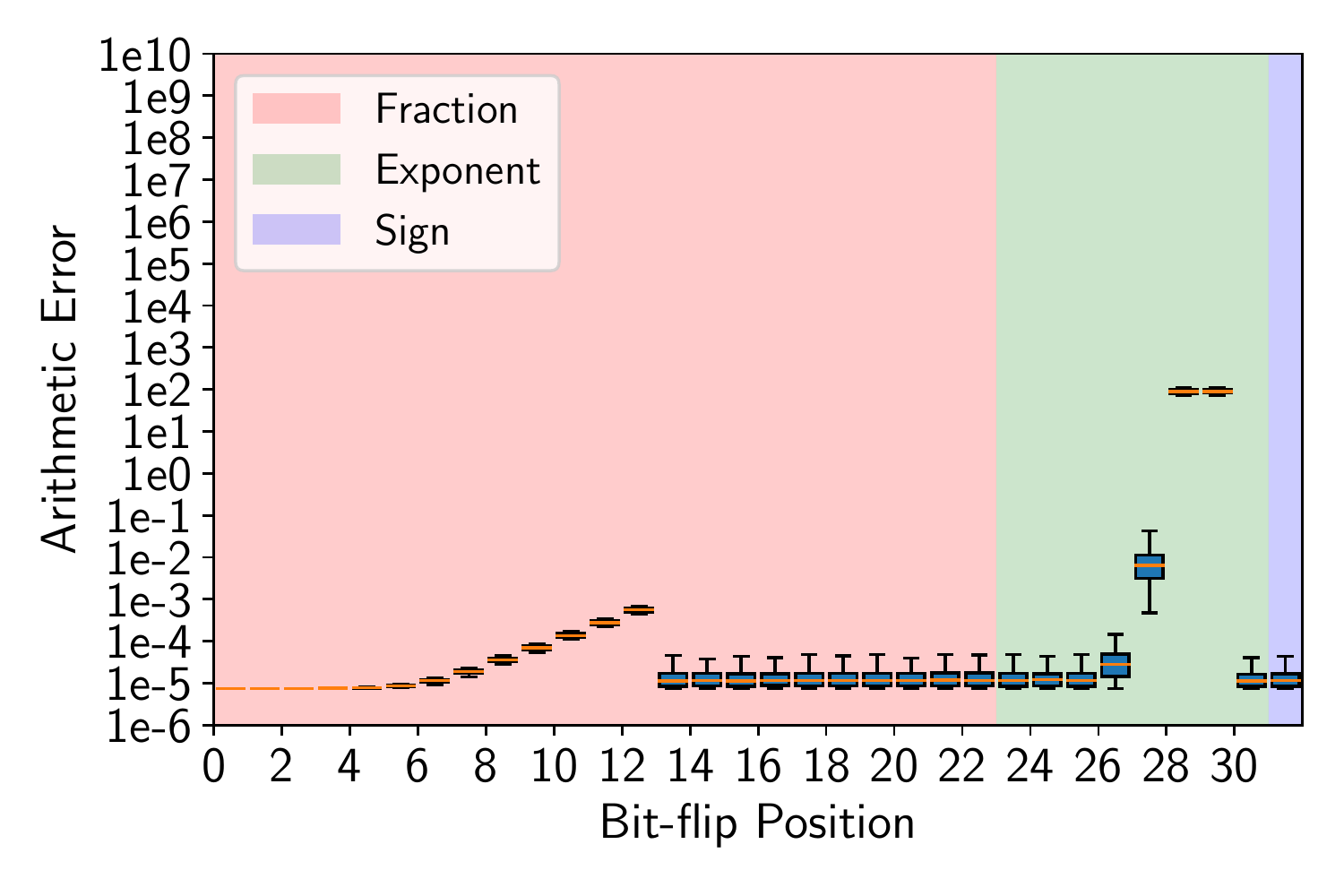}}\vspace{-0.35cm}
        \caption{Online ABFT}\label{fig.biterror.abft}
      \end{subfigure}
      \begin{subfigure}{\linewidth}
          \centering
        \includegraphics[scale=0.48]{{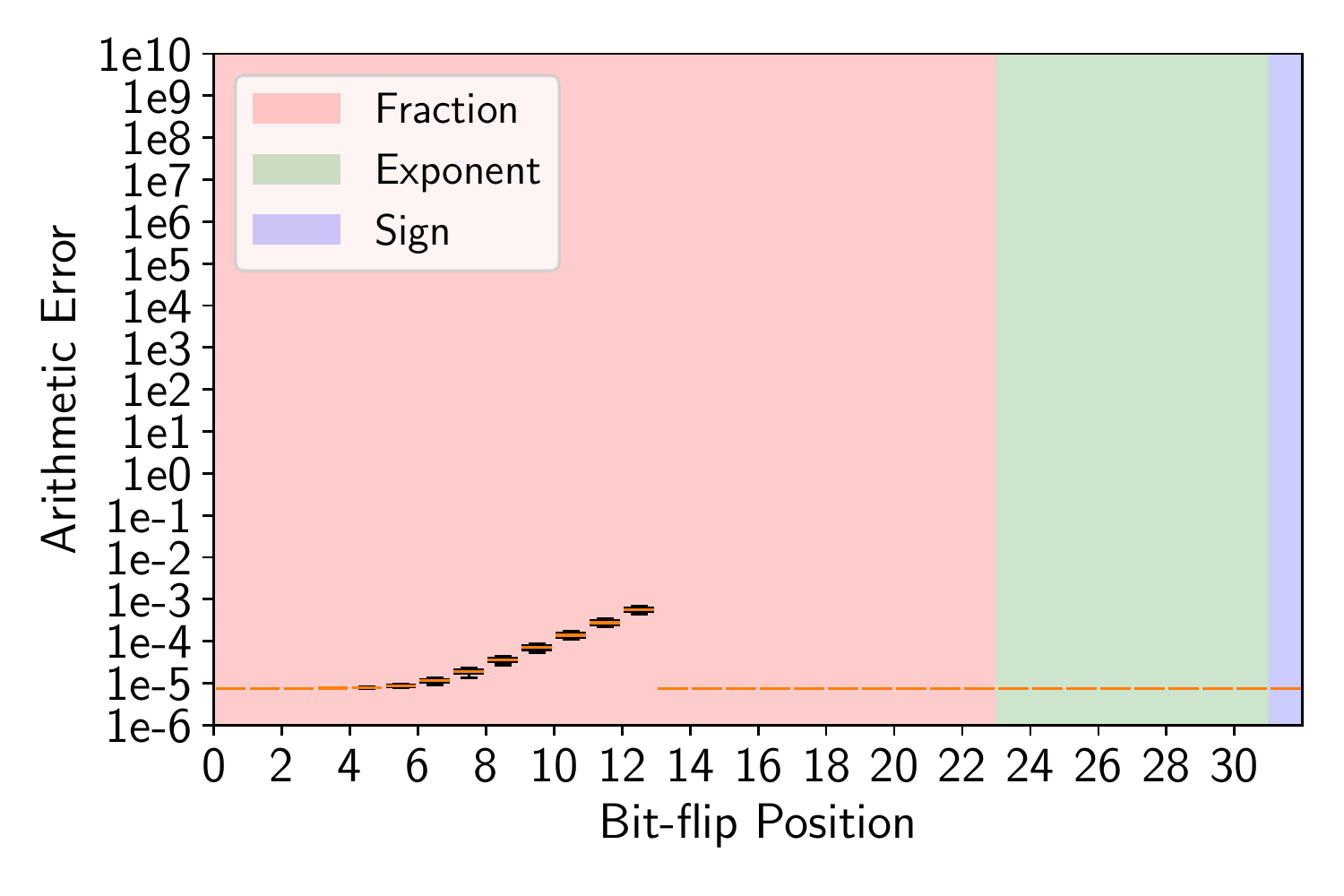}}\vspace{-0.35cm}
        \caption{Offline ABFT}\label{fig.biterror.abft.offline}
      \end{subfigure}
    \end{center}
  \caption{Impact of the bit-flip position on the error at the end of the execution (128 stencil iterations).
  The error in both cases is obtained by comparing the results to the reference value (error-free, single threaded execution) using an $\textit{l}^{2}$-norm (see Equation~\eqref{eq.l2norm}).
  Boxes show the interquartile range (Q3-Q1), where 50\% of the data points lie, and whiskers extend to 75\%. 
  The median error is indicated by an orange line.}
  \label{fig.biterror.float}
  \vspace{-0.1cm}
\end{figure}

Figure~\ref{fig.biterror.noabft} shows the results for the \emph{No-ABFT} implementation, which represents the baseline error when no detection and no correction mechanisms are in place. 
\ac{We observe that bit-flips in the exponent or the sign lead to very high errors in the final results.}

Figure~\ref{fig.biterror.abft} presents the results for the \emph{Online ABFT} implementation. 
While most bit-flips on bit positions between $13$ and $31$ can be detected and corrected, there is always a small residual error, albeit small enough to not cause significant degradation of the final result, in most cases.
However, if the bit-flips occur in the last bits of the exponent, the error causes an overflow in the checksums, which prevents the ABFT method to provide an accurate correction.

Figure~\ref{fig.biterror.abft.offline} shows the results for the \emph{Offline ABFT} implementation. 
Given that the \emph{Offline ABFT} scheme relies on checkpointing and recomputation in case of error, it is able to fully `erase' any error caused by the bit-flip, provided it can be detected.

For both \emph{Online ABFT} and the \emph{Offline ABFT} implementations, any bit-flip in a bit-position between $0$ and $12$, does not cause an error that is large enough to be detected.

\subsection{Impact of the Detection Period on \emph{Offline ABFT}}
\label{sec.interval}

In this section, we investigate the impact of the error detection period on the performance of the \emph{Offline ABFT} implementation. 
Figure~\ref{fig.interval} presents the mean execution time for tile sizes $64\times64\times8$~(Figure~\ref{fig.interval.64}) and $512\times512\times8$~(Figure~\ref{fig.interval.512}) in the error-free scenario and with a single random bit-flip injected during the execution.

\begin{figure}[!htb]
    \begin{center}
      \begin{subfigure}{0.493\linewidth}
        \includegraphics[scale=0.37]{{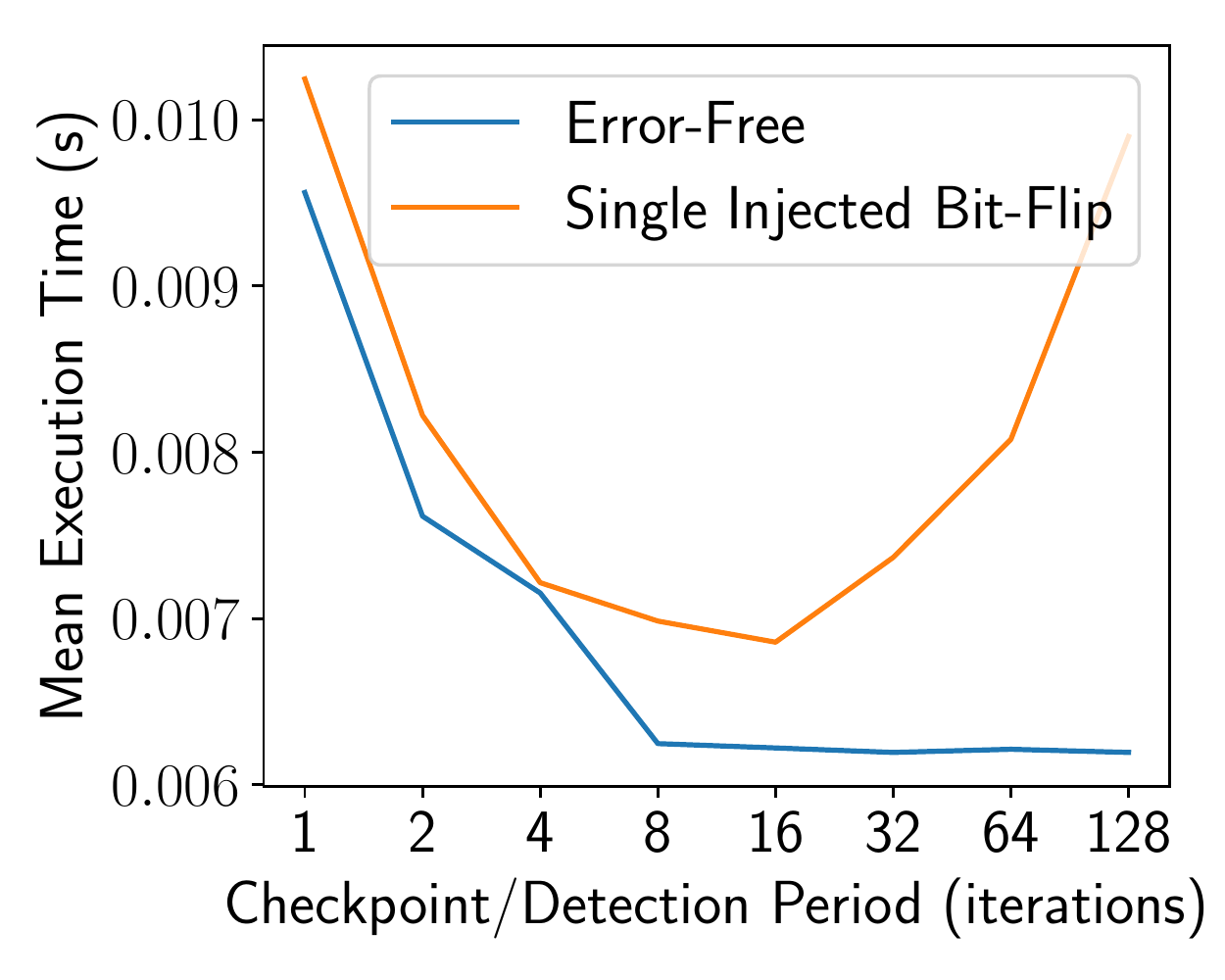}}
        \caption{Tile size \textbf{64$\times$64$\times$8}}\label{fig.interval.64}
      \end{subfigure}
      \begin{subfigure}{0.493\linewidth}
        \includegraphics[scale=0.37]{{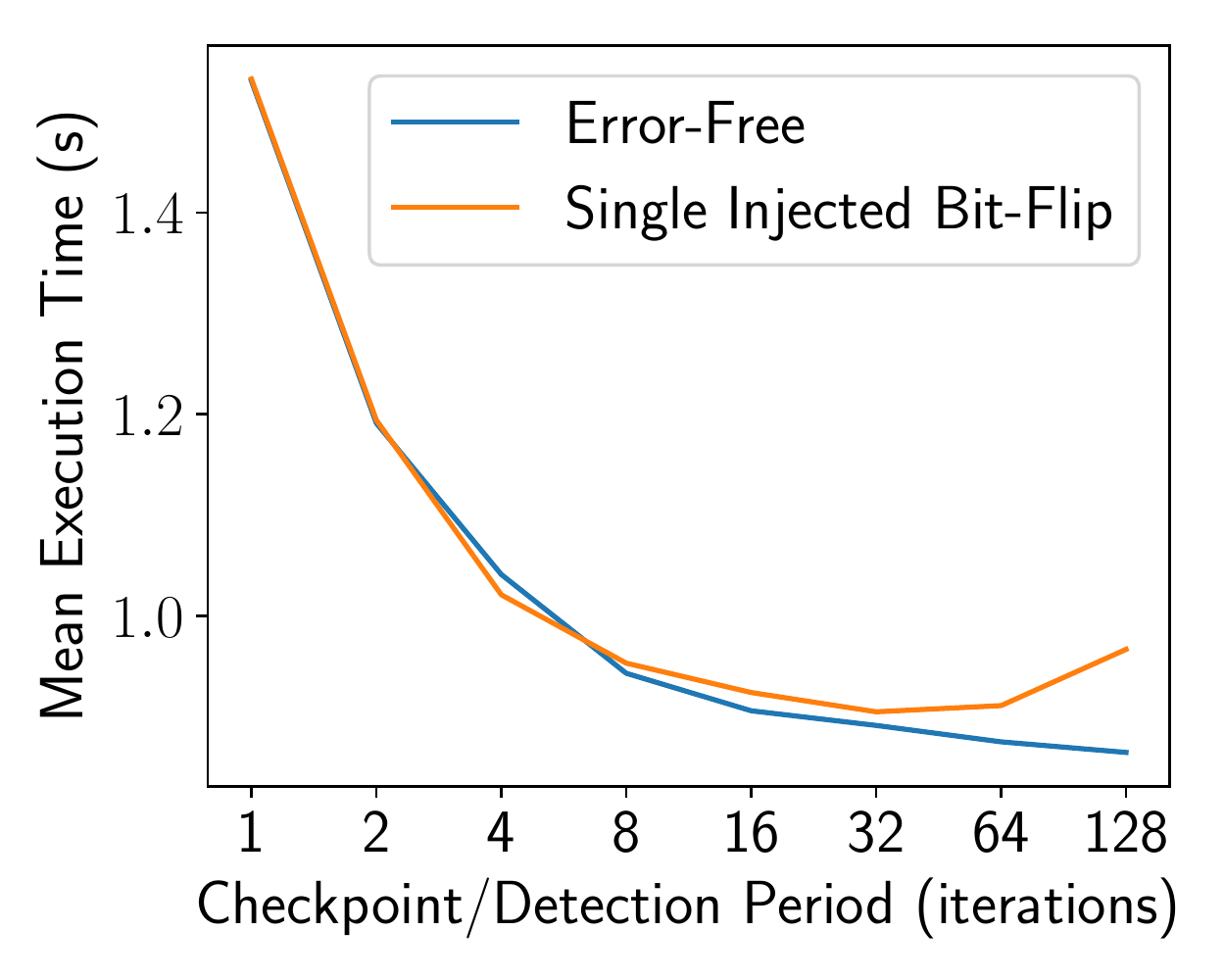}}
        \caption{Tile size \textbf{512$\times$512$\times$8}}\label{fig.interval.512}
      \end{subfigure}
    \end{center}
  \caption{Mean execution time and standard deviation for the proposed \emph{Offline ABFT} method with different detection intervals. 
  Results are presented for (a)~an error-free execution and (b)~with a single injected bit-flip during the execution.}
  \label{fig.interval}
\end{figure}

Experiments were conducted with error detection periods ranging from $0$ to $128$ iterations. 
The first, most important slowdown factor is the checkpointing that must be performed every $\Delta$ iterations. 
In these experiments, we perform a lightweight memory copy of the current state of the grid and of the checksums every $\Delta$ iterations, which has a constant cost. Therefore, the  checkpointing cost can be amortized over a longer period.
Other factors, such as the loop initialization cost or the vectorization capabilities of the system, can also negatively affect performance if the error detection period is too short, but have a negligible effect beyond a period of $16$ stencil iterations.
Overall, a detection period of $8$ or $16$ iterations yields the best ABFT performance in \emph{both} scenarios for HotSpot3D.



\section{Conclusion and Future Work}
\label{sec.conclusion}

In this work, we proposed a novel ABFT method \ac{to detect and correct silent data corruptions in} arbitrary stencil computations on 2D and 3D grids. 
The proposed ABFT scheme is made available both online, to be applied after every stencil iteration, as well as offline, to be applied at the end of the stencil computation or periodically \flo{during} given stencil iterations.
Experiments on a real 3D stencil application show that both the online and offline versions of the ABFT method incur less than $8\%$ overhead in error-free environments.
Fault-injection, detection, and correction experiments show that while the offline ABFT version generally provides near-perfect error detection and correction when coupled with checkpoint recovery, it may incur a slight overhead compared to the online ABFT version. 
Future work is aimed at adapting and evaluating the proposed ABFT method to different applications and hardware architectures, including GPUs and other accelerators, \flo{as well as combinations with other error correction approaches}. 

\section*{Acknowledgements}

This work has been supported by the Swiss Platform for Advanced Scientific Computing (PASC) project SPH-EXA: Optimizing Smooth Particle Hydrodynamics for Exascale Computing. The final publication is available at IEEE xplore.

\bibliographystyle{abbrv}
\bibliography{biblio}

\end{document}